\newdimen\einr
\def\abs#1{\par\hangafter=1\hangindent=\einr
  \noindent\hbox to\einr{\ignorespaces#1\hfill}\ignorespaces} 
\newtheorem{theorem}{Theorem}[section]
\newtheorem{lemma}[theorem]{Lemma}
\newtheorem{proposition}[theorem]{Proposition}
\theoremstyle{definition}
\newtheorem{remark}[theorem]{Remark}
\newtheorem{definition}[theorem]{Definition}
\def\reals{{\mathbb R}} 
\def\eps{\varepsilon}
\def\prob{\hbox{prob}}
\def\mod{\hbox{mod}}
\def\proof{\noindent{\em Proof.\enspace}}
\def\endproof{\hfill\strut\nobreak\hfill\tombstone\par\medbreak}
\def\tombstone{\hbox{\lower.4pt\vbox{\hrule\hbox{\vrule
  \kern7.6pt\vrule height7.6pt}\hrule}\kern.5pt}}
\title{Nash Codes for Noisy Channels%
\thanks{We thank Drew Fudenberg for the suggestion of an
``ex ante'' proof of Theorem~\ref{t-global}, Rann
Smorodinsky for raising the question of potential functions
(see Proposition~\ref{p-pot}), Graham Brightwell for
a comment that led to the improved example in
Figure~\ref{binary}, and Christina Pawlowitsch and Joel
Sobel for stimulating discussions.
Three anonymous referees gave detailed suggestions that
improved this article significantly.  
General thanks go to Amparo Urbano and Jos\'e E. Vila
for continued support.
This work has been supported by the Spanish Ministry of
Science and Technology under project ECO2010-20584/ECON and
FEDER, PROMETEO/2009/068.
}
}
\author{Pen\'elope Hern\'andez%
\thanks{Department of Economic Analysis and ERI-CES,
University of Valencia,
46022 Valencia, Spain.
Email:  penelope.hernandez@uv.es
} \and
\and Bernhard von Stengel%
\thanks{Department of Mathematics,
London School of Economics, London WC2A 2AE, United Kingdom.
Email: stengel@nash.lse.ac.uk}
}
\date{February 16, 2014}
\begin{document}
\maketitle
\begin{abstract}
\noindent
This paper studies the stability of communication protocols
that deal with transmission errors.  
We consider a coordination game between an informed sender
and an uninformed decision maker, the receiver, who
communicate over a noisy channel.
The sender's strategy, called a code, maps states of nature
to signals.
The receiver's best response is to decode the received
channel output as the state with highest expected receiver
payoff.
Given this decoding, an equilibrium or ``Nash code'' results
if the sender encodes every state as prescribed.
We show two theorems that give sufficient conditions for
Nash codes.
First, a receiver-optimal code defines a Nash code.
A second, more surprising observation holds for
communication over a binary channel which is used
independently a number of times, a basic model of
information transmission: 
Under a minimal ``monotonicity'' requirement for breaking
ties when decoding, which holds generically, {\em every}
code is a Nash code.

%
\strut

\noindent \textbf{Keywords:}
{sender-receiver game, communication, noisy
channel.}





\end{abstract}


\section{Introduction}

Information transmission is central to the interaction of
economic agents and to the operation of organizations.
This paper presents a game-theoretic analysis of
communication with errors over a ``noisy channel''.
The noisy channel is a basic model of information theory,
pioneered by Shannon (1948), and fundamental for the design
of reliable data transmission.
In this model, an informed sender sends a message, which is
distorted by the channel, to an uninformed receiver.
Sender and receiver have the common interest that the
receiver understands the sender as reliably as possible.

A communication protocol defines a code, that is, a set of
channel inputs that represent the possible messages for the
sender, and a way for the receiver to decode the channel
output.
One can view the designer of the protocol as a ``social
planner'' who tries to solve an optimization problem, for
example to achieve high reliability and a good rate of
information transmission.
This assumes that sender and receiver adhere to the
protocol.
In this paper, we study this model as a {\em game} between
sender and receiver as two players.
A strategy of the sender is a code, and a strategy of
the receiver is a way to decode the channel output.
Rather than requiring that sender and receiver adhere to
their respective strategies, we assume that they can choose
their strategies freely.
A {\em Nash equilibrium} is a pair of strategies for sender and
receiver that are mutual {\em best responses}.
This is the central stability concept of game theory.

The best response of the receiver is known in the
communications literature as MAP (maximum a posteriori)
decoding.
In contrast, allowing the sender to deviate from the code
(while the receiver strategy is fixed) is specific to the
game-theoretic approach.
If the sender is not in equilibrium, she has an incentive to
change her strategy to encode some message with a
different codeword.
If this happens, the protocol will lose its function as a
de-facto standard of communication.
The appeal of a Nash equilibrium is that it is
self-enforcing.

Sender-receiver games have attracted significant interest in
economics (Spence, 1973; Crawford and Sobel, 1982).
The game-theoretic view is also applied in models of
language evolution (Nowak and Krakauer, 1999; Argiento et
al., 2009).
These assume, as in our case, that the interests of sender
and receiver are fully aligned, and use Nash equilibrium as
the natural stability criterion.
We survey this related literature in more detail below.
In the analysis and design of communication networks, a
growing body of research deals with game-theoretic
approaches that assume selfish agents (Srivastava et al.,
2005; MacKenzie and DaSilva, 2006; Anshelevich et al.,
2008), again with Nash equilibrium as the central solution
concept.


\subsection*{The model}

We consider the classic model of the {\em discrete noisy
channel}.
The channel has a finite set of input and output symbols and
known transition probabilities that represent the possible
communication successes and errors.
The channel may also be used repeatedly, with independent
errors.
In the important case of the binary channel that has only
two symbols, the codewords are then fixed-length sequences
of bits.

In our sender-receiver game, nature chooses one of finitely
many states at random according to a prior probability.
The sender is informed of the state and transmits a signal via
the discrete noisy channel to the uninformed receiver who
makes a decision.
The sender's strategy or {\em code} assigns to each state of
nature a specific signal or ``codeword'' that is the input
to the channel.
The receiver's strategy decodes the distorted signal that is
the channel output as one of the possible states.
Both players receive a (possibly different) positive payoff
only if the state is decoded correctly, otherwise payoff
zero.

In equilibrium, the receiver decodes the channel output as
the state with highest expected payoff.
When all states get equal receiver payoff, the receiver
condition is the well-known MAP 
decoding rule (MacKay, 2003, p.~305).
The equilibrium condition for the sender means that she
chooses for each state the prescribed codeword as her best
response, that is, no other channel input has a higher
probability of being decoded correctly with the given
receiver strategy.

A {\em Nash code} is a code together with a best-response 
decoding function that defines a Nash equilibrium.
So we assume the straightforward equilibrium condition for
the receiver and require that the code fulfills the more
involved sender condition.
(Of course, both conditions are necessary for equilibrium.)

\subsection*{Our results}

We present two main results about Nash codes, along with
other observations that we describe in the outline of our
paper at the end of this introduction.
Our first main result concerns discrete channels with
arbitrary finite sets of input and output symbols.
We show that already for three symbols, not every code
defines a Nash equilibrium.
However, a Nash code results if the expected payoff to the
receiver cannot be increased by replacing a single codeword
with another one (Theorem~\ref{t-opt}).
So these {\em receiver-optimal} codes are Nash codes.
This is closely related to {\em potential games} 
(Proposition~\ref{p-pot}), which may provide the starting
point for studying dynamics of codes until they become Nash
codes, as a topic for further research.

In short, without any constraints on the channel, and for
any best-response decoding, receiver-optimal codes are Nash
codes.
For equal receiver utilities for each state, these are the
codes with maximum expected reliability, which therefore
implies Nash equilibrium.  
The method to show this result is not deep; its purpose is
to analyze our model.
The key assumption is that an improvement in decoding
probability benefits both sender and receiver.
However, a {\em sender-optimal} code is {\em not}
necessarily a Nash code if sender and receiver give
different utilities to a correct decoding of the state of
nature.  
This happens if the sender can use an unused message to
transmit the information about the state more reliably.
If all channel symbols are used, then under reasonable
assumptions the code is Nash (see
Proposition~\ref{p-diag}).\footnote{ We thank an anonymous
referee for suggesting this result.}

Our second main result is more surprising and technically
challenging.
It applies to the {\em binary channel} where codewords are
strings of bits with independent positive error
probabilities for each bit.
Then {\em every} code is a Nash code (Theorem~\ref{t-bin}),
irrespective of its quality.
The only requirement for the decoding is that the receiver
breaks ties between states {\em monotonically}, that is,
in a consistent manner; 
this holds for natural tie-breaking rules, and ties do not
even occur if states of nature have different generic prior
probabilities or utilities.  
That is, for the binary channel, as long as the receiver
decodes optimally and breaks ties consistently, the
equilibrium condition holds automatically on the sender's
side.

Binary codes are fundamental to the practice and theory of
information theory.
Our result that they are Nash codes shows that they are
incentive compatible.
Hence, this condition is orthogonal to engineering issues
such as high reliability and rate of information
transmission.


\subsection*{Related literature}

Information transmission is often modeled in the economic
literature as a sender-receiver game between an informed
expert and an uninformed decision maker.
Standard signaling models (pioneered by Spence, 1973) often
assume that signals have costs associated with the
information of the sender.
In their seminal work on strategic information transmission,
Crawford and Sobel (1982) consider costless signals and
communication without transmission errors, but where the
incentives of sender and receiver differ.
They assume that a fixed interval represents the set of
possible states, messages, and receiver's actions.
Payoffs depend continuously on the difference between state
and action, and differ for sender and receiver.
In equilibrium, the interval is partitioned into finitely
many intervals, and the sender sends as her message only the
partition class that contains the state.
Thus, the sender only reveals partial information about the
state.
Along with many other models (see the surveys by Kreps and
Sobel, 1994, and Sobel, 2013), this shows that information
is not transmitted faithfully for {\em strategic} reasons
because of some conflict of interest.

Even in rather simple sender-receiver games, players can get
higher equilibrium payoffs when communicating over a channel
with noise than with perfect communication (Myerson, 1994,
Section~4).
Blume, Board, and Kawamura (2007) extend the model by
Crawford and Sobel (1982) by assuming communication errors.
The noise allows for equilibria that improve welfare
compared to the Crawford--Sobel model.
The construction partly depends on the specific form of
the errors so that erroneous transmissions can be
identified; this does not apply in our discrete model.
In addition, in our model players only get positive payoff
when the receiver decodes the state correctly, unlike in the
continuous models by Crawford and Sobel (1982) and Blume et
al.\ (2007).
On the other hand, compared to perfect communication, noise 
may prevent players from achieving common knowledge about
the state of nature (Koessler, 2001).

Game-theoretic models of communication have been used in
the study of language (see De Jaegher and van Rooij, 2013,
for a recent survey).
Lewis (1969) describes language as a ``convention'' with 
mappings between states and signals, and argues that these
should be bijections.
Nowak and Krakauer (1999) use evolutionary game theory to
show how languages may evolve from ``noisy'' mappings;
W\"arneryd (1993) shows that only bijections are
evolutionary stable.
However, even ambiguous sender mappings (where one signal is
used for more than one state) together with a mixed receiver
population may be ``neutrally stable'' (Pawlowitsch, 2008);
the randomized receiver strategy can be seen as noise.
Argiento et al. (2009) consider the learning process of a
language in a sender-receiver game.
This is extended to the noisy channel by Touri and Lambort
(2013).

Blume and Board (2013) use the noisy channel to model
vagueness in communication.
Lipman (2009) discusses how vagueness can arise even for
coinciding interests of sender and receiver.
Ambiguous signals arise when the set of messages is smaller
than the set of states, which may reflect communication costs
for the sender (see J\"ager, Koch-Metzger, and Riedel, 2011,
and the discussion in Sobel 2012).
For the sender-receiver game with a noisy binary channel,
Hern\'andez, Urbano, and Vila (2012) describe the equilibria
for a specific code that can serve as a ``universal
grammar''; the explicit receiver strategy allows to
characterize the equilibrium payoff.

Noise in communication is relevant to models of persuasion,
where the sender wants to induce the receiver to take an
action.
Glazer and Rubinstein (2004; 2006) study binary receiver
actions; the sender may reveal limited information about the
state of nature as ``evidence''.
The optimal way to do so is a receiver-optimal mechanism.  
In a more general setting, Kamenica and Gentzkow (2011)
allow the sender to commit to a strategy that selects a
message for each state, assuming the receiver's best
response using Bayesian updating; the sender may generate
noise by selecting the message at random.
Subject to a certain Bayesian consistency requirement, the 
sender can commit to her best possible strategy.

Equilibrium models of information transmission give several
insights.
First, communication may fail:
Every sender-receiver game has a ``babbling equilibrium''
where the sender's action is independent of the state and
the receiver's action is independent of the channel output,
with no information transmitted.
Second, equilibria are typically not unique (for example,
mapping states to signals is often arbitrary).
Third, conflict of interest, or cost and complexity of
communication (Sobel, 2012), prevent perfect communication.

Our approach takes a basic view that communication can be
impeded by noise when interests of sender and receiver are
aligned, and analyzes this issue game-theoretically.
Our results show that the Nash equilibrium condition is
weaker than or, for the binary channel, orthogonal to the
quality of information transmission.


\subsection*{Outline of the paper }

Section~\ref{s-model} describes our model and characterizes
the Nash equilibrium condition.
For channels with any number of symbols,
Section~\ref{s-examples} gives examples that some codes may
not be Nash codes.
Section~\ref{s-recopt} shows that receiver-optimal codes are
Nash, and discusses the relation to potential functions.  
In Section~\ref{s-bin}, we consider binary codes, where we
first demonstrate that tie-breaking needs to be
``monotonic'' when ties occur in order for Nash equilibrium
to hold for every code.
In Section~\ref{s-insy} we show the main Theorem~\ref{t-bin}
that every monotonically decoded binary code is Nash.
This holds in fact not just for binary codes but for any
``input symmetric'' channels with any number of symbols
where the probability of receiving a symbol incorrectly does
not depend on the channel input.
The proof also shows that the property of a channel that
every code is Nash, which we call ``Nash-stability'',
extends to any product of channels (see
Section~\ref{s-stable}) with independent errors.
The product channel assumes independent error probabilities,
but the codewords are still arbitrary combinations of inputs
for such products.
(If the error probabilities are not independent, then the
channel has to be considered with $n$-tuples as input and
output symbols where in general only Theorem~\ref{t-opt}
about receiver-optimal codes applies.)
A natural monotonic decoding rule is to break ties according
to a fixed order among the states, as when they have generic
priors.  
In Section~\ref{s-mon} it is shown that this is in fact the
only general deterministic monotonic tie-breaking rule.  

\section{Nash codes}
\label{s-model}

We consider a game of two players, a sender (she) and a receiver (he). 
First, nature chooses a {\em state\/} $i$ from a set
$\Omega=\{0,1,\ldots,M-1\}$ with positive {\em prior\/}
probability~$q_i$.
Then the sender is fully informed about~$i$, and
sends a message to the receiver via a noisy channel.
After receiving the message as output by the channel, the
receiver takes an action that affects the payoff of both
players.

The channel has finite sets (or ``alphabets'') $X$ and $Y$
of input and output symbols, with noise given by transition
probabilities $p(y|x)$ for each $x$ in $X$ and $y$ in~$Y$.
The channel is used $n$ times independently without
feedback.  
When an input $x=(x_1,\ldots,x_n)$ is transmitted through
the channel, it is altered to an output $y=(y_1,\ldots,y_n)$
according to the probability $p(y|x)$ given by
\begin{equation}
\label{pyx}
p(y|x)=\prod_{j=1}^n p(y_j|x_j).
\end{equation}
This is the standard model of a memoryless noisy channel as
considered in information theory
(see Cover and Thomas, 1991; Gallager, 1968; MacKay, 2003).

The sender's strategy is to encode state $i$ by means of a
coding function or {\em code\/} $c:\Omega\to X^n$, which we write as
$c(i)=x^i$.
We call $x^i$ the {\em codeword\/} or {\em message\/} for
state~$i$ in $\Omega$, which the sender transmits as input
to the channel.
The code $c$ is completely specified by the
list of $M$ codewords $x^0,x^1,\ldots,x^{M-1}$, which is
called the {\em codebook\/}. 

The receiver's strategy is to decode the channel output~$y$,
given by a probabilistic {\em decoding function\/} 
\begin{equation}
\label{dec}
d:Y^n\times \Omega\to\reals,
\end{equation}
where $d(y,i)$ is the probability that $y$ is decoded
as~$i$.

If the receiver decodes the channel output as the
state~$i$ chosen by nature, then sender and receiver get
positive payoff $U_i$ and $V_i$, respectively, otherwise
both get payoff zero.
The incentives of sender and receiver are fully aligned in
the sense that they always prefer that the state is
communicated successfully.
However, the importance of that success may be different for
sender and receiver depending on the state.
The channel transition probabilities, the transmission
length~$n$, and the prior probabilities $q_i$ and utilities
$U_i$ and $V_i$ for $i$ in $\Omega$ are commonly known to
the players. 

\begin{definition}
\label{d-nash}
Consider an encoding function $c:\Omega\to X^n$ and a 
probabilistic decoding function $d$ in $(\ref{dec})$.
If the pair $(c,d)$ defines a Nash equilibrium,
then $c$ is called a {\em Nash code}.
The expected payoffs to sender and receiver are denoted by
$U(c,d)$ and $V(c,d)$, respectively.
\end{definition}

In order to obtain a Nash equilibrium $(c,d)$, receiver and
sender have to play mutually best responses.
The equilibrium property, and whether $c$ is called a Nash
code as part of such an equilibrium, may depend on the
particular best response $d$ of the receiver.

A code $c$ defines the sender's strategy.
A best response of the receiver is the following.
Given that he receives channel output $y$ in $Y^n$, the
probability that codeword $x^i$ has been sent is, by Bayes's
law, $q_i\,p(y|x^i)/\prob(y)$, where $\prob(y)$ is the
overall probability that $y$ has been received.
The factor $1/\prob(y)$ can be disregarded in the
maximization of the receiver's expected payoff.
Hence, a best response of the receiver is to choose with
positive probability $d(y,i)$ only states~$i$ so that 
$q_iV_i\,p(y|x^i)$ is maximal, that is, so that $y$ belongs
to the set $Y_i$ defined by 
\begin{equation}
\label{Yi}
Y_i=\{y\in Y^n\mid q_i V_i\,p(y|x^i)\ge q_k V_k\, p(y|x^k)
~~\forall k\in\Omega\}.
\end{equation}
Hence, the best response condition for the receiver states
that for any  $y\in Y^n$ and $i\in \Omega$
\begin{equation}
\label{arbtie}
d(y,i)>0 \quad\Rightarrow\quad y \in Y_i\,.
\end{equation}
If $V_i=1$ for all $i\in \Omega$, then this decoding rule is
known as MAP or {\em maximum a posteriori decoding} (MacKay,
2003, p.~305).
If the receiver has different positive utilities $V_i$ for
different states~$i$, then the receiver's best response
maximizes $q_iV_i\,p(y|x^i)$.
We call the product $q_iV_i$ the {\em weight} for
state~$i$.
One could assume $V_i=1$ for all~$i$ and only vary $q_i$ in
place of the weight,
but then it seems artificial to allow separate utilities
$U_i$ for the sender, because we want to study the
Nash property with respect to the optimality of codes for
receiver and sender.
For that reason we keep three parameters $q_i$, $U_i$ and
$V_i$ for each state~$i$.

We say that for a given channel output $y$, there is a {\em
tie\/} between two states $i$ and $k$ (or the states are
{\em tied\,}) if $y\in Y_i\cap Y_k$.
If there are never any ties, then the sets $Y_i$ for
$i\in\Omega$ are pairwise disjoint, and the best-response
decoding function is deterministic and unique according
to~(\ref{arbtie}).
If there are ties, then a natural way to break them is to
choose any of the tied states with equal probability.
For that reason we consider probabilistic decoding
functions.
On the sender's side, we only consider deterministic
encoding strategies.

We sometimes refer to the sets $Y_i$ for $i\in \Omega$ as a
``partition'' of $Y^n$, which constrains the receiver's
best-response decoding as in (\ref{arbtie}), even though
some of these sets may be empty, and they may not always be
disjoint if there are ties. 
In any case, $Y^n=\bigcup_{i\in \Omega}Y_i$.

Suppose that the receiver decodes the channel output with
$d$ according to (\ref{Yi}) and (\ref{arbtie}) for the given
code $c$ with $c(i)=x^i$.
Then $(c,d)$ is a Nash equilibrium if and only if, for any
state~$i$, it is optimal for the sender to transmit $x^i$
and not any other $\hat x$ in $X^n$ as a message.
When sending $\hat x$, the expected payoff to the sender in
state~$i$ is
\begin{equation}
\label{pay}
U_i\sum_{y\in Y^n}p(y|\hat x)\,d(y,i).
\end{equation}
When maximizing (\ref{pay}) as a function of $\hat x$,
the utility $U_i$ to the sender
does not matter as long as it is positive; given that the
state is~$i$, the sender only cares about the expected
probability that the channel output~$y$ is decoded as~$i$.
We summarize these observations as follows.

\begin{proposition}
\label{p-send}
The code $c$ with decoding function $d$ is a Nash
code if and only if the receiver decodes channel outputs
according to $(\ref{Yi})$ and $(\ref{arbtie})$, and if and
only if in every state~$i$ the sender transmits codeword
$c(i)=x^i$ which fulfills for any other possible channel
input $\hat x$ in $X^n$ 
\begin{equation}
\label{sender}
\sum_{y\in Y^n} p(y|x^i)\,d(y,i)
\ge
\sum_{y\in Y^n} p(y|\hat x)\,d(y,i)\,. 
\end{equation}
\end{proposition}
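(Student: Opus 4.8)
The plan is to unpack the definition of Nash equilibrium for this Bayesian game into its two incentive constraints, one per player, and to show that the receiver's constraint is exactly (\ref{Yi})--(\ref{arbtie}) while the sender's constraint is exactly (\ref{sender}). Since the two constraints concern disjoint parts of the two strategies, their conjunction is equivalent to $(c,d)$ being a Nash equilibrium, i.e.\ a Nash code, which is the claim.

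First I would write the receiver's expected payoff explicitly. State $i$ occurs with probability $q_i$, the sender transmits $x^i$, the channel produces $y$ with probability $p(y|x^i)$, and the receiver collects $V_i$ precisely when he decodes $y$ as the true state $i$, which happens with probability $d(y,i)$. Summing gives
\[
V(c,d)=\sum_{i\in\Omega}\sum_{y\in Y^n} q_i V_i\,p(y|x^i)\,d(y,i)=\sum_{y\in Y^n}\sum_{i\in\Omega} d(y,i)\,\bigl(q_i V_i\,p(y|x^i)\bigr).
\]
For the fixed code $c$, this is a sum over outputs $y$ of terms that are linear in the probability vector $d(y,\cdot)$. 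Because $d(y,\cdot)$ is a distribution over $\Omega$, each such term is maximized exactly by placing all weight on the states $i$ attaining $\max_k q_k V_k\,p(y|x^k)$, that is, on states with $y\in Y_i$ as defined in (\ref{Yi}). Hence $d$ is a receiver best response to $c$ if and only if it satisfies (\ref{arbtie}), and this maximization is carried out independently for each $y$; the factor $1/\prob(y)$ plays no role.

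Next I would treat the sender. Because she is informed of the state, her strategy chooses a codeword for each state separately, and her expected payoff is
\[
U(c,d)=\sum_{i\in\Omega} q_i U_i \sum_{y\in Y^n} p(y|x^i)\,d(y,i),
\]
a sum of non-negative contributions, one per state, depending on disjoint parts of the code. Maximizing over $c$ therefore decomposes into maximizing, for each state $i$ separately, the term $q_i U_i \sum_y p(y|x^i)\,d(y,i)$ over the choice of $x^i$. Since $q_i U_i>0$, these positive factors drop out, so the sender's best response in state $i$ is to choose $x^i$ maximizing $\sum_y p(y|x)\,d(y,i)$ over all inputs $x\in X^n$, which is precisely inequality (\ref{sender}).

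Finally I would combine the two parts: $(c,d)$ is a Nash equilibrium if and only if $d$ is a best response to $c$ and $c$ is a best response to $d$, which by the two steps above is exactly the conjunction of (\ref{arbtie}) and (\ref{sender}). I do not expect a genuine obstacle, as the argument is essentially the bookkeeping already sketched in the text preceding the statement; the only points requiring care are the two decompositions---pointwise over outputs $y$ on the receiver's side and state-by-state on the sender's side---and the observation that the strictly positive factors $\prob(y)$, $q_i$, $U_i$, and $V_i$ may be dropped inside each maximization without changing the set of maximizers.
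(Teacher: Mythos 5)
Your proposal is correct and follows essentially the same route as the paper, which derives the receiver's condition pointwise in $y$ via Bayes's law (discarding the factor $1/\prob(y)$) and the sender's condition state by state from the interim payoff (\ref{pay}), dropping the positive factor $U_i$. The only cosmetic difference is that you phrase the sender's side via the ex-ante payoff $U(c,d)$ and then decompose it, whereas the paper works directly with the interim payoff per state; these are equivalent since the codeword choices for distinct states enter disjoint, nonnegatively weighted terms.
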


\section{Examples of codes that are not Nash}
\label{s-examples}

This section presents introductory examples of channels that
are used once ($n=1$) and that illustrate that the Nash
equilibrium condition does not hold automatically.
At the end of this section, we show in
Proposition~\ref{p-diag} that, under certain assumptions,
the Nash property holds when all channel symbols are used
for transmission.

For our first example, consider a channel with three
symbols, $X=Y=\{0,1,2\}$, which is used only once ($n=1$), 
with the following transition probabilities: 
\begin{equation}
\renewcommand{\arraystretch}{1.5}
\label{ternary}
\begin{tabular}{|r|lll|}
\hline
& &  \hfil $y$ & \\[-1ex]
\raise 2ex\hbox{$p(y|x)$} & \hfil 0 & \hfil 1 & \hfil 2 \\
\hline
0 & 0.7 & 0.15 & 0.15\\
~$x$ \hfill 1 & 0.25 & 0.5 & 0.25 \\
2 & 0.2 & 0.2 & 0.6 \\
\hline
\end{tabular}
\end{equation} 
Suppose that there are two states ($m=2$) and that nature
chooses the two states from $\Omega=\{0,1\}$ with uniform
priors $q_0=q_1=1/2$.
The sender's utilities are $U_0=2$ when the state is $0$
and $U_1=8$ when the state is 1, and the receiver's
utilities are $V_0=6$, $V_1=4$.

Consider the codebook $c$ with $c(0)=x^0=0$ and $c(1)=x^1=1$,
so the sender codifies the two states of nature as the two
symbols $0$ and $1$, respectively. 
Given the parameters of this game and the sender's
strategy~$c$, the receiver's strategy assigns to each output
symbol in $\{0,1,2\}$ one state.
The following table (\ref{nonash}) gives the expected payoff
$q_i V_i \,p(y|x^i)$ for the receiver when the state is $i$
and the output symbol is $y$.
\begin{equation}
\renewcommand{\arraystretch}{1.5}
\label{nonash}
\begin{tabular}{|r|lll|}
\hline
& &  \hfil $y$ & \\[-1ex]
\raise 2ex\hbox{$q_i V_i \,p(y|x^i)$} & \hfil 0 &  \hfil 1 & \hfil 2 \\
\hline
0 & 2.1 & 0.45 & 0.45\\[-1ex]
\raise 2ex\hbox{~~$i$} \hfill
1 & 0.5 & 1 & 0.5 \\
\hline
\end{tabular}
\end{equation}
This shows how to find the receiver's best response and the
sets $Y_i$ in (\ref{Yi}).  
For each channel output $y$, the receiver chooses the state
$i$ with highest expected payoff.
Hence, he decodes the channel output $0$ as state $0$ because
$q_0V_0\,p(0|x^0)=2.1>0.5=q_1V_1\,p(0|x^1)$.
In the same way, he decodes both channel outputs $1$ and $2$
as state~$1$.
Here there are no ties, so the two sets $Y_0$ and
$Y_1$ are disjoint, and the receiver's best response is
unique and deterministic.
That is, the receiver's best response $d$ is given by
$d(y,i)=1$ if $y \in Y_i$, where $Y_0=\{0\}$ and
$Y_1=\{1,2\}$, and by $d(y,i)=0$ otherwise.

A shorter form of obtaining (\ref{nonash}) from
the channel transition probabilities in (\ref{ternary}) is
shown in (\ref{weighted}), which is (\ref{ternary}) with
each row prefixed by the weight $q_iV_i$ when the channel
input for that row is used as codeword $x^i$.  
Multiplying the channel probabilities with these weights
gives (\ref{nonash}), and a box surrounds $p(y|x^i)$ if
output $y$ is decoded as state~$i$. 
These boxes therefore also show the sets $Y_i$ if there are
no ties, as in the present case; in the case of ties, and
deterministic decoding, they show the state that is 
actually decoded by the receiver.  
\begin{equation}
\renewcommand{\arraystretch}{1.5}
\label{weighted}
\begin{tabular}{c|r|lll|}
\cline{2-5}
&& &  \hfil $y$ & \\[-1ex]
\raise 2ex\hbox{$q_iV_i$} &
\raise 2ex\hbox{$p(y|x)$} &
\hfil 0 & \hfil 1 & \hfil 2 \\
\cline{2-5}
3&0 & \fbox{0.7} & ~0.15 & ~0.15\\
2&~$x$ \hfill 1 & ~0.25 & \fbox{0.5} & \fbox{0.25} \\
&2 & ~0.2 & ~0.2 & ~0.6 \\
\cline{2-5}
\end{tabular}
\end{equation} 

With the help of Proposition~\ref{p-send}, it is easy to see
from (\ref{weighted}) that this code $c$ is not a Nash code.  
For $i=0$ and $x^0=0$, we have
$\sum_{y \in Y}p(y |0)d(y,0)=0.7$,
which is the maximum of the column entries $p(y|x)$ for
$y=0$ in (\ref{weighted}), so here the sender cannot improve
her payoff by transmitting any $\hat x$ instead of~$x^i$.
However, for $i=1$ we have
$\sum_{y \in Y}p(y |1)d(y,1)=0.5+0.25=0.75<0.8
=0.2+0.6=\sum_{y \in Y}p(y |2)d(y,2)$, so (\ref{sender})
does not hold when $x^i=1$ and $\hat x=2$ and the sender can
improve her payoff by sending $\hat x$ instead of~$x^i$.

Is there a Nash code for the channel in (\ref{ternary}) when
$\Omega=\{0,1\}$ and for the described priors and utilities?
First, a simple and trivial Nash code is to map both states
to the same, arbitrary channel input, $x^0=x^1$.
Then every channel output results from the same row (for
that input) in (\ref{ternary}) and, because $q_0V_0>q_1V_1$,
will be decoded as state~0.
The sender cannot improve her payoff because the receiver in
effect ignores the uninformative channel output.
This is also called a ``babbling'' or ``pooling''
equilibrium, which is a Nash equilibrium for any channel.

When the codewords are distinct ($x^0\ne x^1$), there are
six possible ways to choose them from the three channel
inputs.
Table~\ref{6codes} lists these codebooks $x^0,x^1$, shown in
the first column.
For each code, the receiver's best response is unique.
The best-response partition $Y_0,Y_1$ is shown in the second
column.
Using this partition, the third column gives the
probabilities $p(y \in Y_i\mid x^i)=\sum_{y \in Y_i}p(y|x^i)$
that the codeword $x^i$ is decoded correctly.
The overall expected payoffs to sender and receiver are
shown as $U$ and~$V$, with a box indicating the respective
maximum.  

\begin{table}[hbtp]
\caption{Possible codebooks $x^0,x^1$ with $x^0\ne x^1$ for
the channel (\ref{ternary}) and expected payoffs $U$ and $V$
to sender and receiver.}
\label{6codes}
\[
\renewcommand{\arraystretch}{1.5}
\begin{array}{|c|cc|cc|c|c|}
\hline
~~x^0,x^1~~ & ~~Y_0~~ & ~~Y_1~~ &
  p(y\in Y_0\mid x^0) & p(y\in Y_1\mid x^1) & U & V \\
\hline
0,1 & \{0\} & \{1,2\} & 0.70 & 0.75 & ~\fbox{3.70}~ & ~~3.60~~ \\
0,2 & \{0,1\} & \{2\} & 0.85 & 0.60 & ~~3.25~~ & 3.75 \\
1,0 & \{1,2\} & \{0\} & 0.75 & 0.70 & 3.55 & 3.65 \\
1,2 & \{0,1\} & \{2\} & 0.75 & 0.60 & 3.15 & 3.45 \\
2,0 & \{1,2\} & \{0\} & 0.80 & 0.70 & 3.60 & ~\fbox{3.80}~ \\
2,1 & \{0,2\} & \{1\} & 0.80 & 0.50 & 2.80 & 3.40 \\
\hline
\end{array}
\]
\end{table}


Similar to using (\ref{weighted}) for the codebook $0,1$, it
can be verified that the codebook $2,1$ is not a Nash code.
In addition, Table~\ref{6codes} shows directly that the
codebook $1,0$ is not a Nash code:
It has the same best response of the receiver (given by
$Y_0=\{1,2\}$ and $Y_1=\{0\}$) as the codebook $2,0$, but
a lower payoff to the sender (3.55 instead of 3.6), who can
therefore improve her payoff by changing $x^0=1$ to~$\hat
x=2$ (note that the receiver's reaction stays fixed).
Similarly, codebook $1,2$ has the same best response as
$0,2$, but a lower payoff to the sender (3.15 instead of
3.25).

Only the codebooks $2,0$ and $0,2$ in Table~\ref{6codes} are
Nash codes.
Apart from a direct verification, this follows from Theorems 
\ref{t-global} and \ref{t-opt}, respectively, which we will
discuss in the next section.

The fact that a code is not Nash seems to be due to the fact
that not all symbols of the channels are used for
transmission.
With some qualifications, this is indeed the case, as we
discuss in the remainder of this section.

Consider the channel in (\ref{ternary}) and suppose that
there are three states, $\Omega=\{0,1,2\}$.
However, even when each state is assigned to a different
input symbol, one can replicate the counterexample in
(\ref{weighted}) when the additional state $2$ has a weight
$q_2V_2$ that is too low.
For example, if priors are uniform as before ($q_i=1/3$) and
$V_0=6$, $V_1=4$, and $V_2=1$, then the channel outputs
would be decoded as before when state~2 is absent, with the
same lack of the Nash property.

Hence, one should require that all {\em output\/} symbols
are decoded differently.
However, the following example shows that this may still 
fail to give a Nash code:
\begin{equation}
\renewcommand{\arraystretch}{1.5}
\label{3diff}
\begin{tabular}{c|r|lll|}
\cline{2-5}
&& &  \hfil $y$ & \\[-1ex]
\raise 2ex\hbox{$q_iV_i$} &
\raise 2ex\hbox{$p(y|x)$} &
\hfil 0 & \hfil 1 & \hfil 2 \\
\cline{2-5}
0.35 &0 & \fbox{0.4} & ~0.3& ~0.3\\
0.35&~$x$ \hfill 1 & ~0.3  & \fbox{0.4} & ~0.3 \\
0.3\phantom5&2& ~0.05 & ~0.45 & \fbox{0.5} \\
\cline{2-5}
\end{tabular}
\end{equation} 
Suppose states 0, 1, 2 are encoded as 0, 1, 2 and have the
indicated weights 0.35, 0.35, 0.3, respectively.
Here, the row for channel input 1 has slightly higher weight
than for input~2, so because $0.35\times 0.4> 0.3\times
0.45$ the decoding function is just the identity.
However, for state~1 the sender can improve the probability
of correct decoding by deviating from $x^1=1$ to $\hat x=2$
because $0.4<0.45$.

In (\ref{3diff}), for any input $x$ the corresponding output
$y=x$ has the highest probability of arriving, but this is
not relevant for decoding.
With uniform priors and utilities, a reasonable condition
for the channel is ``identity decoding'', that is,
for any received output $y$, the maximum likelihood input is
$x=y$.
That is, suppose that
\begin{equation}
\label{diag}
X=Y,\qquad
p(y|y)>p(y|x)
\quad\hbox{for all }y\in Y,~x\in X,~x\ne y
\end{equation}
which says that each output symbol $y$ is more likely to have
been received correctly than in error.
This property is violated in (\ref{3diff}), but if it holds
then the following proposition applies. 

\begin{proposition}
\label{p-diag}
Consider a channel with input and output alphabets $X$ and
$Y$ so that $(\ref{diag})$ holds.
Let $c$ be a code so that each channel output is decoded as
coming from a different channel input $x^i$ with a
deterministic best-response decoding function~$d$.
Then $(c,d)$ is a Nash equilibrium and $c$ is a Nash code.  
Every output $y$ is decoded as a state $i$ so that $x^i=y$
and so that $q_iV_i$ is maximal.
\end{proposition}

\proof
By assumption, $X=Y$ and the map $\phi:Y\to X$ defined by
$\phi(y)=x^i$ if $d(y,i)=1$ is injective and hence a
bijection.
Suppose $\phi$ is not the identity map, so it has a cycle
of length $l>1$, which by permuting $\Omega$ we assume
as coming form the first $l$ states
$x^0,x^1,\ldots,x^{l-1}$, that is, $\phi(x^j)=x^{j+1\mod l}$
for $0\le j<l$.  
So channel output $x^0$ is decoded as state~1 because
$\phi(x^0)=x^1$, output $x^1$ is decoded as state~2, and so
on.
Because $d$ is a best-response decoding function,
$q_0V_0\,p(x^0|x^0)\le q_1V_1\,p(x^0|x^1)$ and therefore 
\[
q_0V_0\le q_1V_1\frac{p(x^0|x^1)}{p(x^0|x^0)}<q_1V_1
\]
by (\ref{diag}).
In the same manner, $q_1V_1<q_2V_2<\cdots<
q_{l-1}V_{l-1}<q_0V_0$, a contradiction.

So $\phi$ is identity map.
Consider any state~$i$.
If $d(y,i)=0$ for all outputs $y$, then (\ref{sender}) holds
trivially.
Otherwise, channel output $x^i$ is decoded as state~$i$ and
(\ref{sender}) holds because 
\[
\sum_{y\in Y} p(y|x^i)\,d(y,i)=p(x^i|x^i)\ge p(x^i|\hat x)
= \sum_{y\in Y} p(y|\hat x)\,d(y,i)
\]
by (\ref{diag}).
So $c$ is a Nash code.
The encoding function $c$ is surjective because every input
$x^i$ occurs as a possible decoding as a state~$i$. 
However, if $|\Omega|>|X|$, then $c$ is not injective.
If $x^i=x^k$, then $d(y,i)=1$ requires that $x^i=y$ and that
$q_iV_i\ge q_kV_k$ by the best-response condition (in fact
for any state~$k$), as claimed.
\endproof

In many contexts, in particular when a channel is used
repeatedly, a code does not use all possible channel inputs
in order to allow for redundancy and error correction.
Sufficient conditions for Nash codes beyond
Proposition~\ref{p-diag} are therefore of interest.

\section{Receiver-optimal codes}
\label{s-recopt}

In this section, we show that every code that maximizes the
receiver's payoff is a Nash code.
The proof implies that this holds also if the receiver's
payoff is locally maximal, that is, when changing only a
single codeword, and the corresponding best response of the
receiver, at a time.
Finally, we discuss the connection with potential functions.

In the example (\ref{weighted}), changing the codebook $c$
to $c'$ where $c'(1)=\hat x=2$ improves the sender payoff
from $U(c,d)$ to $U(c',d)$, where $d$ is the receiver's
best-response decoding for code~$c$.
In addition, it is easily seen that the receiver payoff also
improves from $V(c,d)$ to $V(c',d)$, and his payoff
$V(c',d')$ for the best response $d'$ to $c'$ is possibly
even higher.
This observation leads us to a sufficient condition for Nash
codes.

\begin{definition}
\label{d-opt}
A {\em receiver-optimal code} is a code $c$ with highest
expected payoff to the receiver, that is, so that
\[
V(c,d)\ge V(\hat c, \hat d)
\]
for any other code $\hat c$,
where $d$ is a best response to $c$ 
and $\hat d$ is a best response to $\hat c$.
\end{definition}

Note that in this definition, the expected payoff $V(c,d)$
(and similarly $V(\hat c,\hat d)$) does not depend on the
particular best-reponse decoding function $d$ in case $d$
is not unique when there are ties, because the receiver's
payoff is the same for all best responses~$d$.

The following is the central theorem of this section.
It is proved in three simple steps,\footnote{ We are indebted
to Drew Fudenberg who suggested steps two and three.} which
give rise to a generalization that we discuss afterwards,
along with examples and further observations.

\begin{theorem}
\label{t-global}
Every receiver-optimal code is a Nash code.
\end{theorem}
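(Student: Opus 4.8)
The plan is to show the contrapositive: if a code $c$ is not a Nash code, then it is not receiver-optimal, because we can construct another code with strictly higher receiver payoff. So suppose $(c,d)$ fails the sender condition of Proposition~\ref{p-send}. Then for some state~$i$ there is a channel input $x\ne x^i$ that the sender strictly prefers to $x^i$, that is, by~(\ref{sender}),
\begin{equation}
\label{oneimprov}
\sum_{y\in Y^n} p(y|x)\,d(y,i)
>
\sum_{y\in Y^n} p(y|x^i)\,d(y,i)\,.
\end{equation}
I would then define a new code $\hat c$ that agrees with $c$ on every state except~$i$, where it uses this improving input: $\hat c(i)=x$ and $\hat c(k)=x^k$ for $k\ne i$.

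The first step is to observe that the \emph{same} receiver strategy $d$ (which was a best response to~$c$) gives a strictly higher receiver payoff against $\hat c$ than against~$c$. This is the ``alignment'' of incentives noted in the introduction: the receiver's expected payoff, like the sender's, is a weighted sum over states of the decoding-success probabilities, and the two codes differ only in state~$i$. Concretely, $V(\hat c,d)-V(c,d)$ equals $q_iV_i$ times the left-hand minus right-hand side of~(\ref{oneimprov}), which is strictly positive. The key point is that the factor multiplying the improvement is $q_iV_i>0$ for the receiver just as it was $q_iU_i>0$ for the sender, so an input that helps the sender in state~$i$ helps the receiver in state~$i$ by the same channel-output probabilities.

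The second step is that $d$ need not be a best response to the new code $\hat c$, so I pass to a genuine best response $\hat d$ of the receiver against $\hat c$. By the definition of best response, switching from $d$ to $\hat d$ can only increase (weakly) the receiver's payoff: $V(\hat c,\hat d)\ge V(\hat c,d)$. Chaining the two inequalities gives
\[
V(\hat c,\hat d)\ge V(\hat c,d)>V(c,d),
\]
so $\hat c$ beats $c$ for the receiver and $c$ is not receiver-optimal. This contradicts the hypothesis, completing the contrapositive.

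I expect the only subtlety, rather than a genuine obstacle, to be bookkeeping: making sure the receiver payoff really does split additively across states in the form $V(c,d)=\sum_{i}q_iV_i\sum_{y}p(y|x^i)\,d(y,i)$, so that changing a single codeword changes exactly one summand, and checking that the best-response inequality $V(\hat c,\hat d)\ge V(\hat c,d)$ is the correct direction (it is immediate, since $\hat d$ maximizes the receiver's payoff for the fixed code $\hat c$ while $d$ is merely some decoding). No fixed-point or optimization machinery is needed beyond these two monotonicity observations.
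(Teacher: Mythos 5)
Your proof is correct and follows essentially the same route as the paper's: isolate a single-codeword deviation that improves the decoding-success probability in one state, note that the receiver's payoff (with the old decoding $d$ held fixed) improves by $q_iV_i$ times that same quantity, and then pass to the receiver's best response against the new code to conclude $V(\hat c,\hat d)\ge V(\hat c,d)>V(c,d)$. The only cosmetic difference is that you invoke Proposition~\ref{p-send} to get the single-state improvement directly, whereas the paper first considers an arbitrary profitable deviation and then extracts a one-codeword change from it (a step the paper's own footnote acknowledges is equivalent to citing Proposition~\ref{p-send}).
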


\begin{proof}
Let $c$ be a receiver-optimal code with codebook
$x^0,x^1,\ldots,x^{M-1}$,
and let $d$ be an arbitrary decoding function.
Suppose there exists a code $\hat{c}$ with codebook
$\hat{x}^0,\hat{x}^1,\ldots,\hat{x}^{M-1}$ so that
$U(\hat c,d)>U(c,d)$, that is, 
\begin{equation}
\label{hatc}
\sum_{i \in \mathstrut\Omega} q_i U_i
\sum_{y \in \mathstrut Y^n} p(y | \hat{x}^i) d(y,i) > 
\sum_{i \in \mathstrut\Omega} q_i U_i
\sum_{y \in \mathstrut Y^n} p(y | x^i) d(y,i).
\end{equation}
If $d$ is a best response to $c$ according to $(\ref{Yi})$
and $(\ref{arbtie})$, then (\ref{hatc}) holds for some
$\hat c$ if and only if $c$ is not a Nash code, so suppose
that $c$ is not a Nash code;
however, the following steps one and two apply for any~$d$.

Step one: Clearly, (\ref{hatc}) implies%
\footnote{ This claim follows also directly from
Proposition~\ref{p-send}, but we want to refer later to
(\ref{hatc}) as well.} 
that there exists at least one $i \in \Omega$ so that 
\begin{equation}
\label{oneimprov}
\sum_{y \in \mathstrut Y^n} p(y | \hat{x}^i) d(y,i) >
\sum_{y \in \mathstrut Y^n} p(y  | x^i) d(y,i).
\end{equation}
Consider the new code $c'$ which coincides with $c$ except
for the codeword for state~$i$, where we set
$c'(i)=\hat{x}^i$.
So the codebook for $c'$ is
$x^0,\ldots,x^{i-1},\hat{x}^i,x^{i+1},\ldots,x^{M-1}$.
By (\ref{oneimprov}), we also have
\begin{equation}
\label{deviation}
\begin{array}{rl}
U(c',d)=&\displaystyle
\sum_{j\in \mathstrut\Omega,~j\neq i} q_j U_j
\sum_{y \in \mathstrut Y^n} p(y | x^j) d(y,j)+
q_i U_i \sum_{y \in \mathstrut Y^n} p(y| \hat{x}^i) d(y,i)\\[4ex]
>&\displaystyle
\sum_{j \in \mathstrut\Omega} q_j U_j
\sum_{y \in \mathstrut Y^n} p(y| x^j) d(y,j)
= U(c,d).\hfill
\end{array}
\end{equation}
Step two:
In the same manner, (\ref{oneimprov}) implies an improvement
of the receiver function, that is,
\begin{equation}
\label{improve}
V(c',d)>V(c,d).
\end{equation} 
Step three:
Let $d$ be the best response to $c$ and
let $d'$ be the best response to $c'$.
With (\ref{improve}), this implies
\[
V(c',d')\ge V(c',d)>V(c,d).
\]
Hence, code $c'$ has higher expected receiver payoff than $c$.
This contradicts the assumption that $c$ is a receiver-optimal code.
\end{proof}

In Table~\ref{6codes}, the codebook $2,0$ is
receiver-optimal, and a Nash code in agreement with
Theorem~\ref{t-global}.

We have shown that the codebook $0,1$ in Table~\ref{6codes}
is not a Nash code.
Note, however, that this is the code with highest sender
payoff.
Hence, a ``sender-optimal'' code is not necessarily a Nash
code.  
The reason is that, because sender and receiver have
different payoffs for the two states, the sender prefers the
code with large partition class $Y_1$ for state~1, but then
can deviate to a better, unused message within $Y_1$.
(Note that the sender's payoff only improves when the
receiver's response stays fixed; with best-response
decoding, the code $0,2$ has a worse payoff $U$ to the
sender than $0,1$.)

In Table~\ref{6codes},
the code $c$ with codebook $0,2$ is also seen to be a Nash
code with the help of Table~\ref{6codes} according to the
proof of Theorem~\ref{t-global}.
Namely, it suffices to look for profitable sender deviations
$c'$ where only one codeword is altered, which would also
imply an improvement to the receiver's payoff from $V(c,d)$
to $V(c',d)$, and hence certainly an improvement to his
payoff $V(c',d')$ where $d'$ is the best response to~$c'$.
For the two possible codes $c'$ given by $1,2$ and $0,1$,
the receiver payoff $V$ does not improve according
to~Table~\ref{6codes}, so $c$ is a Nash code.
By this reasoning, any ``locally'' receiver-optimal
code, according the following definition, is also a Nash
code, as stated afterwards in Theorem~\ref{t-opt}.

\begin{definition}
\label{d-local}
A {\em locally receiver-optimal code} is a code $c$ so that
no code $c'$ that differs from $c$ in only a single codeword
gives higher expected payoff to the receiver.
That is, for all $c'$ with $c'(i)\ne c(i)$ for some
state~$i$, and $c'(k)=c(k)$ for all $k\ne i$, 
\[
V(c,d)\ge V(c', d')
\]
where $d$ is a best response to~$c$ 
and $d'$ is a best response to~$c'$.
\end{definition}

\begin{theorem}
\label{t-opt}
Every locally receiver-optimal code is a Nash code. 
\end{theorem}

\begin{proof}
Apply the proof of Theorem \ref{t-global} from Step~two
onwards.
\end{proof}

Clearly, every receiver-optimal code is also locally
receiver-optimal, so Theorem~\ref{t-global} can be
considered as a corollary to the stronger Theorem~\ref{t-opt}.

Local receiver-optimality is more easily verified than
global receiver-optimality, because much fewer codes $c'$
have to be considered as possible improvements for the
receiver payoff according to Definition~\ref{d-local}.
A locally receiver-optimal code can be reached by iterating
profitable changes of single codewords at a time.
This simplifies the search for a (nontrivial) Nash code.

To conclude this section, we consider the connection to
{\em potential games} which also allow for iterative
improvements in order to find a Nash equilibrium.
As in Monderer and Shapley (1996, p.~127),
consider a game in strategic form with finite player
set~$N$, and pure strategy set $S_i$ and utility function 
$u^i$ for each player~$i$.
Then the game has an (ordinal) {\em potential function}
$P: \prod_{j\in N}S_j\to\reals$ if for all $i\in N$ and
$s^{-i}\in \prod_{j\ne i}S_j$ and $s^i,\hat s^{\,i}\in S_i$,
\begin{equation}
\label{potential}
u^i(s^{-i},\hat s^{\,i})> u^i(s^{-i},s^i)
\quad
\Leftrightarrow
\quad
P(s^{-i},\hat s^{\,i})> P(s^{-i},s^i).
\end{equation}
The question is if in our game, the receiver's payoff is a
potential function.\footnote{ We thank Rann Smorodinsky for
raising this question.}
The following proposition gives an answer.

\begin{proposition}
\label{p-pot}
Consider the game with $M+1$ players where for each state
$i$ in $\Omega$, a separate agent~$i$ transmits a codeword
$c(i)$ over the channel, which defines a function
$c:\Omega\to X^n$, and where the receiver decodes each
channel output with a decoding function $d$ as before.
Each agent receives the same payoff $U(c,d)$ as the original
sender. 
Then
\parskip0pt
\begin{itemize}
\item[(a)]
Any Nash equilibrium $(c,d)$ of the $(M+1)$-player game is a 
Nash equilibrium of the original two-player game, and vice
versa.
\item[(b)]
The receiver's expected payoff is a potential function for
the $(M+1)$-player game.
\item[(c)]
The receiver's expected payoff is not necessarily a
potential function for the original two-player game.
\end{itemize}
\end{proposition}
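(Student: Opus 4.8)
The plan is to exploit the separable structure of the payoffs. For a fixed decoding $d$, the sender's payoff $U(c,d)=\sum_{i\in\Omega}q_iU_i\sum_{y\in Y^n}p(y|x^i)d(y,i)$ splits into a sum whose $i$-th term depends only on the codeword $x^i$, and the same holds for $V(c,d)$ with weights $q_iV_i$ in place of $q_iU_i$. This separability is the engine behind all three parts.

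For part~(a), note first that both games have the identical receiver, so the receiver's best-response condition coincides. If $(c,d)$ is a Nash equilibrium of the two-player game, then no full code $\hat c$ raises $U(\cdot,d)$, hence in particular no single-codeword change does, so each agent~$i$ is best-responding and $(c,d)$ is an $(M+1)$-player equilibrium. For the converse, suppose $(c,d)$ is an $(M+1)$-player equilibrium but some $\hat c$ satisfies $U(\hat c,d)>U(c,d)$; this is exactly inequality (\ref{hatc}), and Step~one of the proof of Theorem~\ref{t-global} extracts from it a single state~$i$ fulfilling (\ref{oneimprov}), contradicting that agent~$i$ best-responds. Thus the Nash equilibria of the two games coincide.

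For part~(b), I would verify the potential condition (\ref{potential}) with $P=V(c,d)$ player by player. When agent~$i$ replaces $x^i$ by $\hat{x}^i$ (all other codewords and $d$ held fixed), the agent's payoff changes only in the summand $q_iU_i\sum_y p(y|x^i)d(y,i)$ and the receiver's payoff only in $q_iV_i\sum_y p(y|x^i)d(y,i)$; since $q_iU_i>0$ and $q_iV_i>0$, both change in the sign of $\sum_y p(y|\hat{x}^i)d(y,i)-\sum_y p(y|x^i)d(y,i)$. This is exactly the implication from (\ref{oneimprov}) to (\ref{improve}) used as Step~two above, so agent~$i$'s payoff rises if and only if the potential rises. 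For the receiver himself the payoff equals the potential, so the condition is immediate. Hence $V(c,d)$ is a potential for the $(M+1)$-player game.

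For part~(c), the obstacle is that in the two-player game the sender may alter several codewords simultaneously, and because the weights $U_i$ and $V_i$ differ across states, the aggregate changes in $U$ and $V$ can have opposite signs even though no single-codeword change can (by part~(b)). I would make this explicit in the running example (\ref{ternary}) with $U_0=2$, $U_1=8$, $V_0=8$, $V_1=2$, $q_0=q_1=1/2$. Fix $d$ with $Y_0=\{0\}$ and $Y_1=\{1,2\}$, and compare the pooling codes $c=(0,0)$ and $\hat c=(1,1)$. Switching $x^0$ from $0$ to $1$ lowers the state-$0$ decoding probability while switching $x^1$ from $0$ to $1$ raises the state-$1$ probability; with the reversed utilities this is a net gain for the sender but a net loss for the receiver. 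A short computation gives $U(c,d)=1.45$, $V(c,d)=3.55$ against $U(\hat c,d)=3.70$, $V(\hat c,d)=1.30$, so $U(\hat c,d)>U(c,d)$ while $V(\hat c,d)<V(c,d)$, contradicting (\ref{potential}). Hence $V$ is not a potential for the two-player game.
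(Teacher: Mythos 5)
Your proposal is correct and follows essentially the same route as the paper: part (a) via the single-codeword extraction of (\ref{oneimprov}) from (\ref{hatc}), part (b) via the separability of $U$ and $V$ across states so that a one-codeword change moves both in the same sign, and part (c) via a two-codeword counterexample in the ternary channel (\ref{ternary}). The only difference is the specific instance in (c) --- you compare the pooling codes $0,0$ and $1,1$ under $Y_0=\{0\}$, $Y_1=\{1,2\}$, whereas the paper compares $0,1$ and $1,2$ under a different $d$ --- and your arithmetic ($U$: $1.45$ vs.\ $3.70$; $V$: $3.55$ vs.\ $1.30$) checks out.
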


\begin{proof}
Every profile $c$ of $M$ strategies for the agents in the
$(M+1)$-player game can be seen as a sender strategy in
the original game, and vice versa.
To see (a), let $(c,d)$ be a Nash equilibrium of the
$(M+1)$-player game.
If there was a profitable deviation $\hat c$ from $c$ for
the sender in the two-player game as in (\ref{hatc}), then
there would also be a profitable deviation $c'$ that changes
only one codeword $c(i)$ as in (\ref{deviation}), which is a
profitable deviation for agent~$i$, a contradiction.
The ``vice versa'' part of (a) holds because any profitable
deviation of a single agent is also a deviation for the
sender in the original game.

Assertion (b) holds because for any~$i$ in $\Omega$,
(\ref{deviation}) is, via (\ref{oneimprov}), equivalent to
(\ref{improve}).

To see (c), consider the example (\ref{ternary}) with
$c$ and $\hat c$ given by the codebooks $1,0$ and $2,1$, 
respectively, and $d$ decoding channel outputs $y=0,1,2$
as states $0,0,1$, respectively.
Then the payoffs to sender and receiver are
\[
\arraycolsep2pt
\renewcommand{\arraystretch}{1.2}
\begin{array}{rlll}
U(c,d) &= q_0U_0(p(0|1)+p(1|1))+q_1U_1\,p(2|0)
&=1\times (0.25+0.5)+4\times0.15 &= 1.35\\
V(c,d) &= q_0V_0\,(p(0|1)+p(1|1))+q_1V_1\,\,p(2|0)
&=3\times (0.25+0.5)+2\times0.15 &= 2.55\\
U(\hat c,d) &= q_0U_0(p(0|2)+p(1|2))+q_1U_1\,p(2|1)
&=1\times (0.2+0.2)+4\times0.25 &= 1.4\\
V(\hat c,d) &= q_0V_0\,(p(0|2)+p(1|2))+q_1V_1\,\,p(2|1)
&=3\times (0.2+0.2)+2\times0.25 &= 1.7\\
\end{array}
\]
which shows that (\ref{potential}) does not hold with $u^i$
as sender payoff and $P$ as receiver payoff, because these
payoffs move in opposite directions when changing the
sender's strategy from $c$ to $\hat c$, for this~$d$.  
\end{proof}

A global maximum of the potential function gives a Nash
equilibrium of the potential game (Monderer and Shapley,
1996, Lemma~2.1).
Hence, (a) and (b) of Proposition~\ref{p-pot} imply that
a maximum of the receiver payoff defines a Nash equilibrium,
as stated in Theorem~\ref{t-global}.
It is also known that a ``local'' maximum of the potential
function defines a Nash equilibrium (Monderer and Shapley,
1996, footnote~4). 
However, this does not imply Theorem~\ref{t-opt}.
The reason is that in a local maximum of the potential
function, the function cannot be improved by unilaterally
changing a single player's strategy.
In contrast, in a locally receiver-optimal code, the
receiver's payoff cannot be improved by changing a single
codeword {\em together} with the receiver's best
response.
As a trivial example, any ``babbling'' Nash code for
(\ref{ternary}) where $x^0=x^1$ is not locally
receiver-optimal, but is a ``local maximum'' of the receiver
payoff.

In a potential game, improvements of the potential function
can be used for dynamics that lead to Nash equilibria.
For our games, the study of such dynamics may be an
interesting topic for future research.  

\section{Binary channels and monotonic decoding}
\label{s-bin}

Our next main result (stated in the next section) concerns
the important {\em binary channel} with $X=Y=\{0,1\}$.
The two possible symbols 0 and 1 for a single use of the
channel are called {\em bits}.
The binary channel is the basic model for the transmission
of digital data and of central theoretical and practical
importance 
in information theory (see, for example, Cover
and Thomas, 1991, or MacKay, 2003).

We assume that the channel errors $\eps_0=p(1|0)$ and
$\eps_1=p(0|1)$ fulfill  
\begin{equation}
\label{errors}
\eps_0>0,
\qquad
\eps_1>0,
\qquad
\eps_0+\eps_1<1,
\end{equation}
where $\eps_0+\eps_1<1$ is equivalent to either of the
inequalities, equivalent to (\ref{diag}),
\begin{equation}
\label{error2}
1-\eps_0>\eps_1,
\qquad
1-\eps_1>\eps_0.
\end{equation}
These assert that a
received bit 0 is more likely to have been sent as 0 (with
probability $1-\eps_0$) than sent as bit 1 and received with
error (with probability~$\eps_1$), and
similarly
that a received bit 1 is more likely to have been sent as 1
than received erroneously.
It may still happen that bit 0, for example, is transmitted
with higher probability incorrectly than correctly, for
example if $\eps_0=3/4$ and $\eps_1=1/8$.

Condition (\ref{errors}) can be assumed with very little
loss of generality.
If $\eps_0=\eps_1=0$ then the channel is error-free and
every message can be decoded perfectly.
If $\eps_0+\eps_1=1$ then the channel output is independent
of the input and no information can be transmitted.
For $\eps_0+\eps_1>1$ the signal is more likely to be
inverted than not, so that one obtains (\ref{errors}) by
exchanging 0 and~1 in $Y$.

Condition (\ref{errors}) does exclude the case
of a ``Z-channel'' that has only one-sided errors, that is, 
$\eps_0=0$ or $\eps_1=0$. 
We assume instead that this is modelled by vanishingly small
error probabilities, in order to avoid channel outputs $y$
in $Y^n$ that cannot occur for some inputs~$x$ when
$\eps_0=0$ or $\eps_1=0$. 
With (\ref{errors}), every channel output $y$ has positive,
although possibly very small, probability.  

The binary channel is {\em symmetric\/} when
$\eps_0=\eps_1=\eps>0$, where $\eps<1/2$ by (\ref{errors}).

The binary channel is used $n$ times independently.
A code $c:\Omega\to X^n$ for $X=\{0,1\}$ is also
called a {\em binary code}.
Our main result about binary codes (Theorem~\ref{t-bin} below)
implies that {\em any\/} binary code is a Nash code,\footnote{ %
Hern\'andez, Urbano, and Vila (2010) show that for a binary
noisy channel, the decoding rule of ``joint typicality''
used in a standard proof of Shannon's channel coding theorem
(Cover and Thomas, 1991, Section~8.7) may not define a Nash
equilibrium.
}
 provided the decoding is {\em monotone}.
This monotonicity condition concerns how the receiver
resolves ties when a received channel output $y$ can be
decoded in more than one way.

We first consider an example of a binary code that shows
that the equilibrium property may depend on how the receiver
deals with ties.  
Assume that the channel is symmetric with error
probability~$\eps$.
Let $M=4$, $n=3$, and consider the codebook
$x^0,x^1,x^2,x^3$ given by $000,100,010,001$.
All four states $i$ have equal prior probabilities $q_i=1/4$
and equal sender and receiver utilities $U_i=V_i=1$.
The sets $Y_i$ in (\ref{Yi}) are given by
\begin{equation}
\label{Y0123}
\begin{array}{ll}
Y_0=\{000\},
&
Y_2=\{010,011,110,111\},
\\
Y_1=\{100,101,110,111\},
~~
&Y_3=\{001,011,101,111\}.
\\
\end{array}
\end{equation}
This shows that for any channel output $y$ other than an
original codeword $x^i$, there are ties between at least two
states.
For example, $110\in Y_1\cap Y_2$ because $110$ is received
with probability $\eps(1-\eps)^2$ for $x^1$ and $x^2$ as
channel input.  
For $y=111$, all three states $1,2,3$ are tied.

\begin{figure}[hbt]
\begin{center}
\includegraphics[width=6.5cm]{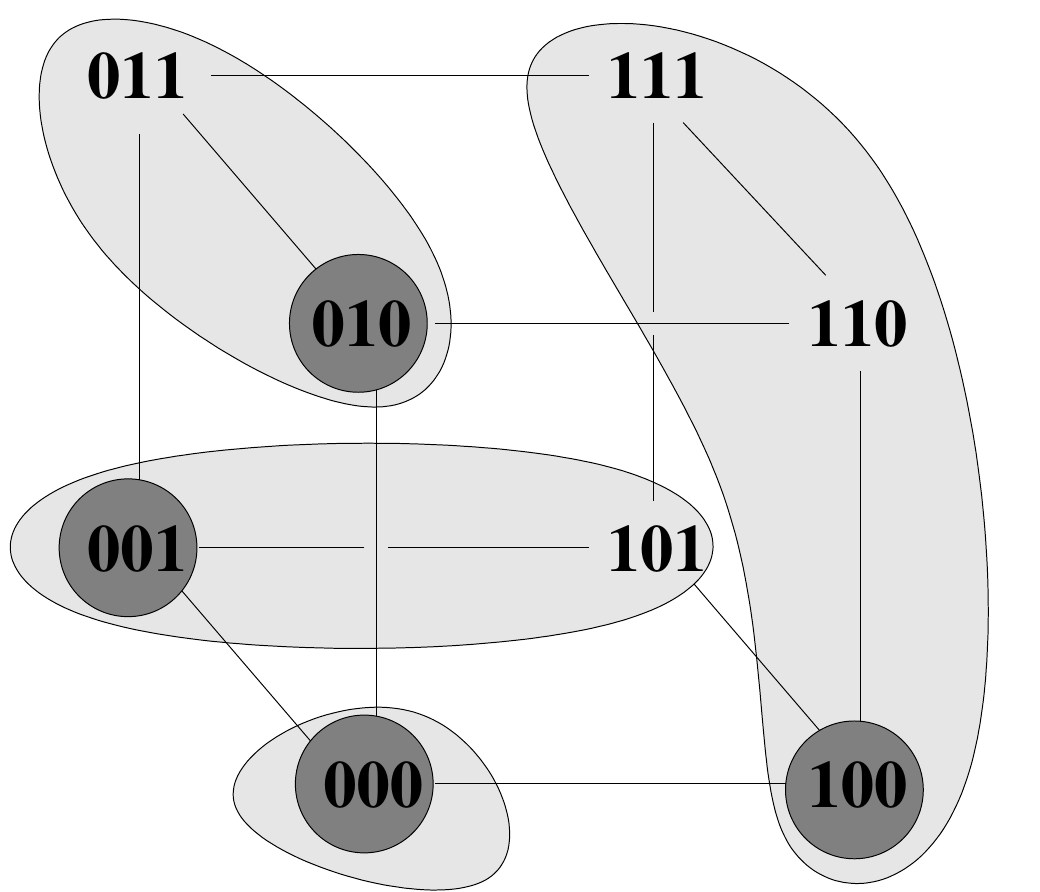}
\hfill
\includegraphics[width=6.5cm]{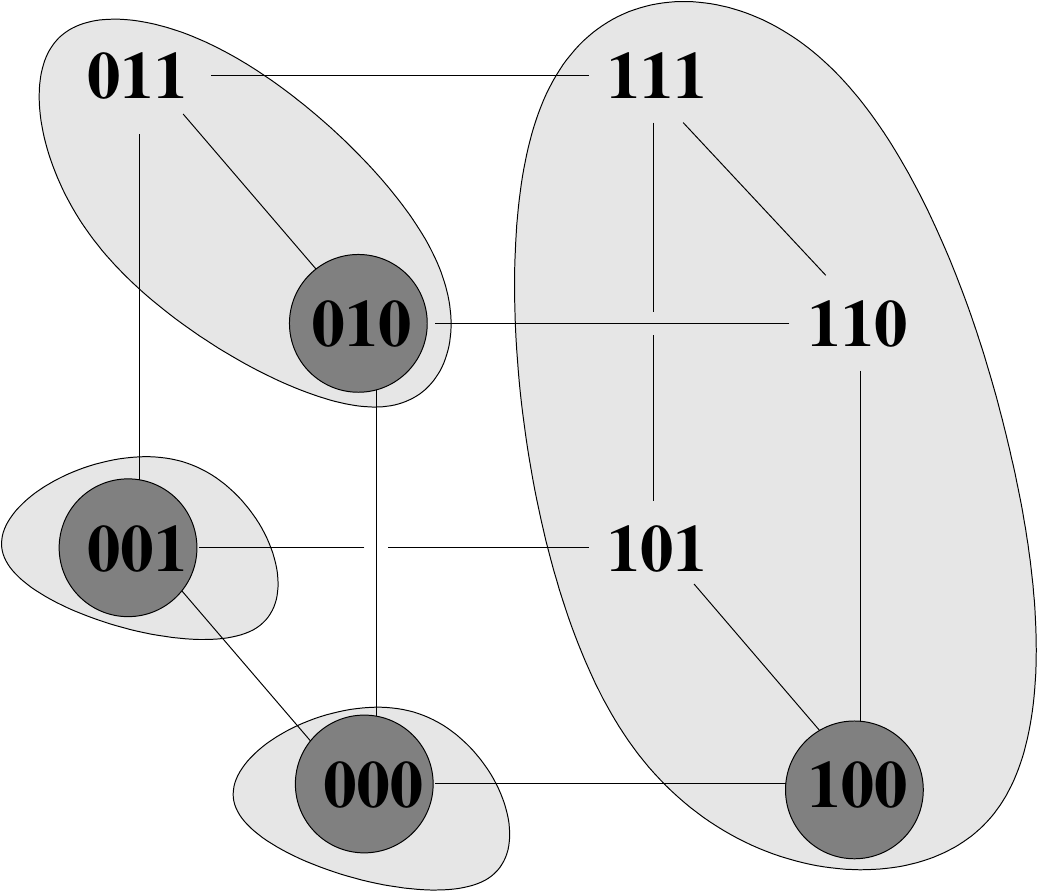}
\end{center}
\caption{Binary code with four codewords 000, 100, 010, 001,
with non-monotonic decoding (left) and monotonic decoding
(right, discussed in Section~\ref{s-mon}).
The light-grey sets indicate how a channel output is decoded.
}
\label{binary}
\end{figure}

Consider first the case that the receiver decodes the
channel outputs $110, 011, 101$ as states $1,2,3$,
respectively, that is, according to
\begin{equation}
\label{circular} 
d(110,1)=1,\quad d(011,2)=1,\quad d(101,3)=1.
\end{equation}
We claim that this cannot be a Nash code, irrespective of
the decoding probabilities $d(111,i)$ which can be positive
for any $i=1,2,3$ by (\ref{Y0123}).
The situation is symmetric for $i=1,2,3$, so assume that
$d(111,i)$ is positive when $i=1$;
the case of a deterministic decoding where $d(111,1)=1$ is
shown on the left in Figure~\ref{binary}.
Then the receiver decodes $y$ as state~1 with positive
probability when $y$ equals $100$, $110$, or $111$.
When $x^1=100$ is sent, these channel outputs 
are received with probabilities 
$(1-\eps)^3$, $\eps(1-\eps)^2$, and $\eps^2(1-\eps)$,
respectively, so the sender payoff is
\[
(1-\eps)^3+\eps(1-\eps)^2+\eps^2(1-\eps)\,d(111,1)
\]
in (\ref{pay}).
Given this decoding, the sender can improve her payoff in
state~1 by sending $\hat x=110$ rather than $x^1=100$ because
then the probabilities of the channel outputs $100$ and
$110$ are just exchanged, whereas the probability that
output $111$ is decoded as state~1 increases to
$\eps(1-\eps)^2\,d(111,1)$;
that is, given this decoding, sending $\hat x=110$ is more likely
to be decoded correctly as state~1 than sending $x^1=100$.
This violates (\ref{sender}).

The problem with the decoding in (\ref{circular}) is that
when the receiver is tied between states 1, 2, and~3 when
the channel output is $\hat y=111$, he decodes $\hat y$ as
state~1 with positive probability $d(111,1)$, but when he is
tied between even fewer states 1 and 3 when receiving
$y=101$, that decoding probability $d(101,1)$ decreases to
zero.
This violates the following {\em monotonicity\/} condition.

\begin{definition}
\label{d-mono}
Consider a codebook with codewords $x^i$ for $i\in \Omega$.
For a channel output $y$, let $T(y)$ be the set of tied states 
according to
\begin{equation}
\label{defT}
T(y)=\{l\in \Omega\mid y\in Y_l\}.
\end{equation}
Then a decoding function $d$ in $(\ref{dec})$ is called {\em
monotonic} if it is a best response decoding function with
$(\ref{Yi})$ and $(\ref{arbtie})$ and if for all
$y,\hat y\in Y^n$ and states~$i$, 
\begin{equation}
\label{mono}
i\in T(y)\subseteq T(\hat y)
~~\Rightarrow~~
d(y,i)\ge d(\hat y,i).
\end{equation}
Furthermore, $d$ is called {\em consistent} if 
\begin{equation}
\label{consistent}
i\in T(y)=T(\hat y)
~~\Rightarrow~~
d(y,i)= d(\hat y,i).
\end{equation}
\end{definition}

Condition (\ref{mono}) states that the probability of decoding the
channel output as state~$i$ can only decrease when the set
of tied states increases.
Condition (\ref{consistent}) states that the decoding
probability $d(y,i)$ of state~$i$ may only depend on the set
$T$ of states that are tied with~$i$, but not on the
received channel output~$y$.
Clearly, monotonicity implies consistency.
We will show that for certain channels, in particular the
binary channel, monotonic decoding gives a Nash code.
However, for consistent decoding this is not the case.
For example, the decoding shown in the left picture of
Figure~\ref{binary} is consistent because no two channel
outputs have the same set of tied states, but the Nash
property is violated.

Monotonic decoding functions exist, for example by breaking
ties uniformly at random according to $d(y,i)=1/|T(y)|$ for
$i\in T(y)$.  
We study the monotonicity condition in Definition~\ref{d-mono}
in more detail in later sections.

\section{Nash codes for input symmetric channels}
\label{s-insy}

In this section, we state and prove our main result,
Theorem~\ref{t-bin} below, about binary codes.
It turns out that it also applies to the following
generalization of discrete channels where the error
probability $\varepsilon_y$ of receiving an incorrect output
symbol $y$ only depends on~$y$ but not on the input.

\begin{definition}
\label{d-insym}
A discrete channel is {\em input symmetric} if $X=Y$
and there are errors $\varepsilon_y>0$ for $y\in Y$ so that
$\sum_{y\in Y}\varepsilon_y<1$ and for all
$x\in X$, $y\in Y$:
\begin{equation}
\arraycolsep.2em
\label{sym}
\begin{array}{rcll}
p(y|x)& = & \varepsilon_y >0\quad
& \hbox{if }x\ne y,\\
p(y|x)& = & \nu_y >\varepsilon_y
& \hbox{if }x=y,\\
\end{array}
\end{equation}
where $\nu_y=1-\sum_{z\ne y}\varepsilon_z$ and thus for all $y$
\begin{equation}
\label{numinuseps}
\nu_y-\varepsilon_y=1-\sum_{z\in Y}\varepsilon_z>0\,.
\end{equation}
\end{definition}

Clearly, every binary channel is input symmetric.  
The matrix in (\ref{insym}) shows an example of an input
symmetric channel with three symbols.
\begin{equation}
\renewcommand{\arraystretch}{1.5}
\label{insym}
\begin{tabular}{|r|lll|}
\hline
& &  \hfil $y$ & \\[-1ex]
\raise 2ex\hbox{$p(y|x)$} & \hfil 0 & \hfil 1 & \hfil 2 \\
\hline
0 & 0.3 & 0.2 & 0.5\\
~$x$ \hfill 1 & 0.1 & 0.4 & 0.5 \\
2 & 0.1 & 0.2 & 0.7 \\
\hline
\end{tabular}
\end{equation} 
By (\ref{numinuseps}), the transition matrix of an input
symmetric channel is the sum of a matrix where each row is
identical (given by the errors) plus $\bigl({1-\sum_{z\in
Y}\varepsilon_z}\bigr)$ times the identity matrix.
Definition~\ref{d-insym} is chosen for our needs and, to
our knowledge, not common in information theory;
the definition of a symmetric channel by Cover and Thomas
(1991, p.~190) is different, but covers the case where
$\varepsilon_y=\varepsilon$ for all $y$.

A channel that is ``output symmetric'' is shown in
(\ref{ternary}), where for any given input $x$ the outputs
$y$ other than $x$ have the same error probabilities
$p(y|x)$.
As we have shown with that example, such a channel may have
codes that are not Nash codes.

The argument for Theorem~\ref{t-bin} below rests on two lemmas.
It is useful to partially order channel outputs and inputs
by ``closeness'' to a given codeword as follows.

\begin{definition}
\label{d-close}
Let $x,y,z\in S^n$ for some set $S$.
Then $y$ is {\em closer to $x$ than $z$} if and only
if~\footnote{ We thank a referee for correcting this
definition.} 
\[
y_j\ne z_j
~~\Rightarrow~~
y_j=x_j
\qquad
\forall j=1,\ldots,n.
\] 
\end{definition}

The following key lemma states in (\ref{decomon}) that the
decoding probability of a channel output $y$ for a state $i$
does not decrease when $y$ gets closer to the codeword $x^i$.

\begin{lemma}
\label{l-out}
Consider a code for an input symmetric channel, a state~$i$,
channel outputs $y$ and $\hat y$, and assume $y$ is closer
to codeword $x^i$ than $\hat y$.
Then
\begin{equation}
\label{inc}
i\in T(\hat y)
~~\Rightarrow~~ 
i\in T(y)\,,
\end{equation}
\begin{equation}
\label{ties}
i\in T(\hat y)
~~\Rightarrow~~ 
T(y)\subseteq T(\hat y)\,,
\end{equation}
and if the code is monotonically decoded then
\begin{equation}
\label{decomon}
d(y,i)\ge d(\hat y,i).
\end{equation} 
\end{lemma}

\proof
To prove (\ref{inc}), we can assume that $y$ and $\hat y$
differ in only one symbol, because then (\ref{inc}) holds in
general via a sequence of changes of only one symbol at a time.
Assume that $y$ and $\hat y$ differ in the $j$th symbol, that
is, $y_j\ne \hat y_j$ and $y_{-j}=\hat y_{-j}$ with the
notation 
\begin{equation}
\label{y-j}
y_{-j}=(y_s)_{s\ne j},
\quad
y=(y_j,y_{-j}).
\end{equation} 
With (\ref{pyx}), we use the notation
\begin{equation}
\label{x-j}
p(y|x)= p(y_j|x_j)\, p(y_{-j}|x_{-j}):=
p(y_j|x_j) \prod_{s\ne j} p(y_s|x_s), 
\end{equation} 
and, for any $k$ in $\Omega$,
\begin{equation}
\label{Qk}
Q_k:= q_kV_k\,p(y_{-j}|x^k_{-j}).
\end{equation} 
Then by (\ref{Yi}),
$y\in Y_i$ means
$q_i V_i\,p(y|x^i) \ge q_k V_k\, p(y|x^k)$ for all $k$ in
$\Omega$, or equivalently
\begin{equation}
\label{qiVi}
q_i V_i\,
p(y_j|x^i_j)\,p(y_{-j}|x^i_{-j})
\ge
q_k V_k\,
p(y_j|x^k_j)\,p(y_{-j}|x^k_{-j}),
\end{equation}
that is, by (\ref{Qk}), $y\in Y_i$ if and only if
\begin{equation}
\label{inYi}
\frac
{p(y_j|x^i_j)}
{p(y_j|x^k_j)}
\ge
\frac
{Q_k}
{Q_i}
\qquad
\forall k\in\Omega~.
\end{equation}
Because $y$ is closer to $x^i$ than $\hat y$, we have
$y_j=x^i_j\ne\hat y_j$.
Suppose, to show (\ref{inc}), that $\hat y \in Y_i$, that
is, because $y_{-j}=\hat y_{-j}$, 
\begin{equation}
\label{yhat}
\frac
{p(\hat y_j|x^i_j)}
{p(\hat y_j|x^k_j)}
\ge
\frac
{Q_k}
{Q_i}
\qquad
\forall k\in\Omega~,
\end{equation}
and we want to show (\ref{inYi}).
For those $k$ where $x^i_j=x^k_j$, the left-hand side of
(\ref{yhat}) does not depend on $\hat y_j$ (and thus holds with
$y_j$ instead of $\hat y_j$), so consider any state $k$
where $x^i_j\ne x^k_j$.
Then by (\ref{sym}),
\begin{equation}
\label{incsym}
\frac
{p(y_j|x^i_j)}
{p(y_j|x^k_j)}
=
\frac{\nu_{y_j}}{\varepsilon_{y_j}}
> 1=
\frac{\varepsilon_{\hat y_j}}{\varepsilon_{\hat y_j}}
\ge
\frac
{p(\hat y_j|x^i_j)}
{p(\hat y_j|x^k_j)}
\ge
\frac
{Q_k}
{Q_i}~\,
\end{equation}
which shows (\ref{inYi}).
So $\hat y\in Y_i$ implies $y\in Y_i$, which proves
(\ref{inc}).  

To show (\ref{ties}), assume again that $y$ and $\hat y$
differ only in their $j$th symbol, and let
$i\in T(\hat y)$ and $l\in T(y)$ for a state~$l$.
That is, $\hat y\in Y_i$ and $y\in Y_l$, where $y\in Y_i$ by
(\ref{inc}).
Then states $i$ and $l$ are tied for~$y$, and clearly
\begin{equation}
\label{i=l}
\frac
{p(y_j|x^i_j)}
{p(y_j|x^l_j)}
=
\frac
{Q_l}
{Q_i}\,.
\end{equation}
If $x^i_j=x^l_j$ then (\ref{i=l}) implies $Q_l=Q_i$
and (\ref{yhat}) holds with $l$ instead of~$i$, so
$\hat y\in Y_l$, that is, $l\in T(\hat y)$.
If $x^i_j\ne x^l_j$, then the strict inequality
(\ref{incsym}) for $k=l$ contradicts (\ref{i=l}), so this
cannot be the case.
This shows (\ref{ties}).

To show (\ref{decomon}), assume monotonic decoding as in
(\ref{mono}).
If $\hat y\not\in Y_i$, then trivially
$d(y,i)\ge d(\hat y,i)=0$.
Otherwise,
$i\in T(\hat y)$ and thus $i\in T(y)\subseteq T(\hat y)$
by (\ref{inc}) and (\ref{ties}), which shows (\ref{decomon})
by~(\ref{mono}).
\endproof

The next lemma\footnote{ We are grateful to a referee who
suggested this step for the binary channel.}
compares two channel inputs $x$ and $\hat x$ that differ in
a single position $j$, and the corresponding channel output when
that $j$th symbol arrives as $y_j$, for arbitrary other
output symbols $y_{-j}$, using the notation~(\ref{y-j}).

\begin{lemma}
\label{l-in}
Consider a monotonically decoded code for an input symmetric
channel, and channel inputs $x$ and $\hat x$ which differ
only in the $j$th symbol, where $x$ is closer to codeword
$x^i$ than $\hat x$.
Then for all $y_{-j}$
\begin{equation}
\label{single}
\sum_{y_j\in Y} p((y_j,y_{-j})\,|\, x)~d((y_j,y_{-j}),i)
\ge
\sum_{y_j\in Y} p((y_j,y_{-j})\,|\, \hat x)~d((y_j,y_{-j}),i).
\end{equation}
\end{lemma}

\proof
Because $x_{-j}=\hat x_{-j}$ and by (\ref{x-j}),
all terms in (\ref{single}) have $p(y_{-j}|x_{-j})$ as a
common factor.  
By taking that factor out and subtracting the right-hand
side, (\ref{single}) is equivalent to
\begin{equation}
\label{position}
\sum_{y_j\in Y} \bigl(p(y_j\,|\, x_j)-p(y_j\,|\, \hat x_j)\bigr)~d((y_j,y_{-j}),i)
\ge0~.
\end{equation}
If $y_j\ne x_j$ and $y_j\ne\hat x_j$, then
$p(y_j\,|\, x_j)-p(y_j\,|\, \hat x_j)
=\varepsilon_{y_j}-\varepsilon_{y_j}=0$, so
(\ref{position}) is equivalent to 
\begin{equation}
\label{onlytwo}
\bigl(p(x_j\,|\, x_j)-p(x_j\,|\, \hat x_j)\bigr)~d((x_j,y_{-j}),i)
+
\bigl(p(\hat x_j\,|\, x_j)-p(\hat x_j\,|\, \hat x_j)\bigr)
~d((\hat x_j,y_{-j}),i)
\ge0~.
\end{equation}
By (\ref{numinuseps}), $p(x_j\,|\, x_j)-p(x_j\,|\, \hat x_j)=
\nu_{x_j}-\varepsilon_{x_j}=
1-\sum_{z\in Y}\varepsilon_z=
\nu_{\hat x_j}-\varepsilon_{\hat x_j}$,
so that (\ref{onlytwo}) is equivalent to 
\begin{equation}
\Bigl(1-\sum_{z\in Y}\varepsilon_z\Bigr)
\Bigl(d((x_j,y_{-j}),i)-d((\hat x_j,y_{-j}),i) \Bigr)
\ge0~,
\end{equation}
which is true because 
$d((x_j,y_{-j}),i)=d((x_j^i,y_{-j}),i)\ge
d((\hat x_j,y_{-j}),i)$ by~(\ref{decomon}).
This shows (\ref{single}).
\endproof

The following main theorem is essentially a corollary to
Lemma~\ref{l-in}.

\begin{theorem}
\label{t-bin}
Every monotonically decoded code for an input symmetric
channel is a Nash code.
\end{theorem}

\proof
For any position $j$, a channel output $y$ is of the form
$(y_j,y_{-j})$ as considered in (\ref{single}).
If $x$ and $\hat x$ differ only in the $j$th position and $x$ is
closer to $x^i$ than $\hat x$, with $x_j=x^i_j\ne\hat x_j$,
then summing (\ref{single}) over all $y_{-j}$ shows 
\[
\sum_{y\in Y^n}
p(y|x)\, d(y,i)\ge
\sum_{y\in Y^n}
p(y|\hat x)\, d(y,i)\,.
\]
For an arbitrary channel input $\hat x$, considering one
symbol at a time where $\hat x$ differs from $x^i$, this
eventually gives (\ref{sender}), which proves the claim.
\endproof

In (\ref{inYi}), it is used that all transition
probabilities of the channel are positive.
In fact, Theorem~\ref{t-bin} does not hold without this
assumption.

\begin{remark}
\label{r-zero}
If some error probabilities are zero, it is no longer true
that every monotonically decoded binary code is a Nash code.
\end{remark}

\proof
Consider a binary ``Z-channel'' where
$p(1|0)=\varepsilon_0=0$ and
$p(0|1)=\varepsilon_1=\varepsilon>0$, which is used twice
($n=2$), with transmission probabilities shown in
(\ref{zchannel}).  
\begin{equation}
\renewcommand{\arraystretch}{1.5}
\label{zchannel}
\begin{array}{c|r|cccc|}
\cline{2-6}
&&&  \multicolumn{2}{c}{~~~y}& \\[-1ex]
\raise 2ex\hbox{$q_iV_i$~} &
\raise 2ex\hbox{$p(y|x)$} &
~~00~~ & 01 & 10 & 11\\
\cline{2-6}
1& 00 &\fbox1 & 0 & 0 & 0\\ 
1&01 & \varepsilon& \fbox{$1-\varepsilon$}&  \fbox0 & \fbox0  \\
& ~\raise2.5ex\hbox{\smash{$x$}}\hfill 10 
& \varepsilon& 0& 1-\varepsilon& 0   \\ 
& 11 & \varepsilon^2 & \varepsilon(1-\varepsilon) &(1-\varepsilon)\varepsilon & (1-\varepsilon)^2  \\
\cline{2-6}
\end{array}
\end{equation} 
Assume uniform weights $q_iV_i=1$ and let the two codewords
be $x^0=00$ and $x^1=01$, so that $Y_0=\{00,10,11\}$ and
$Y_1=\{01,10,11\}$.
Note that outputs 10 and 11 are both tied because they have
probability zero with these inputs.
Assume that these two ``unobtainable'' outputs are decoded as
state 1, which defines a monotonic decoding rule (for a smaller
set of tied states, the probability of decoding a state
in the smaller set does not go down).
This decoding is indicated by boxes in~(\ref{zchannel}).
However, this is not a Nash code because the sender can
improve the probability of decoding state 1 from
$1-\varepsilon$ to $1-\varepsilon^2$ by choosing $\hat x=11$
instead of $x^0=01$ as channel input.
\endproof

\section{Nash-stable channels}
\label{s-stable}

In this section we carry the analysis of
Section~\ref{s-insy} one step further.
This is motivated by Lemma~\ref{l-in} which asserts, in
effect, that the Nash property applies when varying only the
$j$th symbol in the transmitted $n$-tuple.
That is, if a single use of the channel always gives a Nash
equilibrium under monotonic decoding, then this also holds
when the channel is used $n$ times independently, with
codewords of length~$n$.
In fact, each of the $n$ times one can use a different
channel.
We first give a formal statement and proof of this
observation.
Afterwards, we discuss its relationship to the results of
the previous section.

\begin{definition}
\label{d-stable}
A discrete noisy channel
is called {\em Nash-stable} if, for a single use of the
channel ($n=1$), every monotonically decoded code is a Nash
code, for any number of states $i$ with nonnegative weights $q_iV_i$.
\end{definition}

The following theorem considers a {\em product} of $n$ 
noisy channels with input and output alphabets $X{(j)}$
and $Y{(j)}$
and transition probabilities $p_j(y_j|x_j)$ for $1\le j\le n$.
These channels are used independently with channel inputs 
$x=(x_1,\ldots,x_n)$ and channel outputs
$y=(y_1,\ldots,y_n)$, where $y$ is obtained, analogous to
$(\ref{pyx})$, according to
\begin{equation}
\label{pjyx}
p(y|x)=\prod_{j=1}^n p_j(y_j|x_j).
\end{equation}
Note that the possible inputs $x$ to the product channel
have their $n$ symbols distorted with independent errors,
but the considered codes need not have any product
structure.
That is, the codewords can be chosen in any way just as in
the previously considered case of using the same channel $n$
times.

\begin{theorem}
\label{t-gen}
The product of Nash-stable channels is Nash-stable.  
\end{theorem}

\proof
Let  $X=\prod_{j=1}^n X{(j)}$ and $Y=\prod_{j=1}^n Y{(j)}$.
Consider a finite set $\Omega$ of states and a code
$c:\Omega\to X$, where we denote the codewords by $x^i=c(i)$
as usual for $i$ in~$\Omega$.
Assume that the decoding function $d:Y\times\Omega\to\reals$
is monotonic.
If $c$ is not a Nash code, then there is some
state $i$ and $x=x^i$ and $\hat x$ in $X$ so that 
\begin{equation}
\label{lessxhat}
\sum_{y\in Y} p(y|x)\, d(y,i)< \sum_{y\in Y} p(y|\hat x)\, d(y,i)\,.
\end{equation}
As in Theorem~\ref{t-bin}, this implies that
(\ref{lessxhat}) holds for some $x$ and $\hat x$ in $X$ that
differ only in their $j$th symbol with $x$ closer to $x^i$
than $\hat x$, that is, $x_j=x^i_j\ne\hat x_j$, and
otherwise $x_s=\hat x_s$ for $s\ne j$, so we consider this
case.
Analogously to (\ref{x-j}), we write
$p(y|x)= p_j(y_j|x_j)\, p(y_{-j}|x_{-j})$,
and in addition let $Y_{-j}=\prod_{s\ne j}Y(s)$.
Because $x_{-j}=\hat x_{-j}$, (\ref{lessxhat}) is equivalent to 
\[
\sum_{y_{-j} \in Y_{-j}}
p(y_{-j}|x_{-j})
\sum_{y_j\in Y{(j)}}
(p_j(y_{j}|x^i_{j})-p_j(y_{j}|\hat x_{j}))
d((y_j,y_{-j}),i)< 0.
\]
Hence, for at least one $y_{-j}$ we have
$p(y_{-j}|x_{-j})>0$ and
\begin{equation} 
\label{jdev}
\sum_{y_j\in Y{(j)}}
p_j(y_{j}|x^i_{j})
d((y_j,y_{-j}),i)< 
\sum_{y_j\in Y{(j)}}
p_j(y_{j}|\hat x_{j}))
d((y_j,y_{-j}),i).
\end{equation}
(Apart from the notation $Y(j)$ for the output set of the
$j$th channel, this just states that (\ref{position}) does
not hold.)
We claim that (\ref{jdev}) violates the assumption that the
$j$th channel is Nash-stable.
Namely, consider the same set of states $\Omega$
and the code $C:\Omega\to X(j)$ that encodes
state $i$ as $C(i)=x^i_j$.
The original full codeword $x^i= (x^i_j,x^i_{-j})$ is sent across
the product channel~$X$, and the $j$th output symbol $y_j$
is decoded according to $D:Y{(j)}\times\Omega\to\reals$
defined by 
\begin{equation}
\label{D}
D(y_j,i)=d((y_j,y_{-j}),i)
\end{equation}
for the fixed other outputs $y_{-j}$.
We want that this reflects the original best-response
decoding, which requires that the weights $q_iV_i$ are
replaced by $q_iV_i\,p(y_{-j}|x^i_{-j})$ (which are exactly
the weights $Q_i$ in (\ref{Qk})).
Then we obtain the following division of $Y(j)$
into best-response sets $Y_i(j)$, analogous to (\ref{Yi}):
\begin{equation}
\label{Yij}
Y_i(j)=\{y_j\in Y(j)\mid
q_iV_ip(y_{-j}|x^i_{-j})\,p(y_j|x^i_j)\ge 
q_kV_kp(y_{-j}|x^k_{-j})\,p(y_j|x^k_j)
~~\forall k\in\Omega\}.
\end{equation}
Hence, $y_j\in Y_i(j)$ if and only if $(y_j,y_{-j})\in Y_i$,
which shows that $D$ in (\ref{D}) is indeed a best-response
decoding of the single-channel outputs~$y_j$.
Because $d$ is monotonic, so is $D$, because the tied
states~$l$ for $y_j$ (where $y_j\in Y_l(j))$) are those that
are tied for $y=(y_j,y_{-j})$ (where $y\in Y_l$).
Because of (\ref{jdev}), $(C,D)$ is not a Nash equilibrium
and the $j$th channel is not Nash-stable as claimed.
So $c$ is a Nash code for the product channel.
\endproof

Theorem~\ref{t-bin} states that for an input symmetric
channel that is used $n$ times independently, every code is
a Nash code.
In particular, it is a Nash code for $n=1$, so an input
symmetric channel is Nash-stable.
In addition, Theorem~\ref{t-gen} is more general by 
allowing a different channel for each of the transmitted $n$
symbols, but it is straightforward to extend the proof of
Theorem~\ref{t-bin} to this case if each channel is input
symmetric.

The condition of Nash-stability raises a number of
questions.
First, as the proof of Theorem~\ref{t-gen} shows, a large
number of states $i$ might be encoded with input symbols
$x^i_j$ for the $j$th channel, with different weights~$Q_i$,
in order to use the assumption that the $j$th channel is
Nash-stable.
Does it matter if some of these weights $Q_i$ are zero?
They are given by $Q_i=q_iV_i\,p(y_{-j}|x^i_{-j})$, so this
happens when some channel error probabilities are zero.
This case is not excluded in the definition of
Nash-stability or in Theorem~\ref{t-gen}.
However, such channels, for example the binary Z-channel,
are not Nash-stable (which explains Remark~\ref{r-zero}),
according to the following proposition.
We do not consider the trivial case that $p(y|x)=0$ for all
input symbols $x$, when the output symbol $y$ can be omitted
altogether.

\begin{proposition}
\label{r-zchan}
Consider a discrete noisy channel where for some input
symbols $x$ and $\hat x$ and output symbol $y$ we
have $p(y|x)=0$ and $p(y|\hat x)>0$.
Then this channel is not Nash-stable.
\end{proposition}

\proof
Consider $\Omega=\{0,1\}$, $q_0=q_1=1/2$, $V_0=2$,
$V_1=1$, and the code $x^0=x^1=x$, so both states are mapped
to the same channel input~$x$ which cannot be received
as channel output~$y$.
(This example can in fact be obtained from the proofs of
Theorem~\ref{t-gen} and Remark~\ref{r-zero}.)
All outputs $y'$ with $p(y'|x)>0$ are decoded as the state~0
with higher weight.
For the channel output $y$, both states are tied because
this event has probability zero, so $y\in Y_0$.
The receiver can therefore choose $d(y,1)=1$, that is,
decode output $y$ as state~1, and decode all other outputs
$\hat y$ so that $p(\hat y|x)=0$ as state~1 as well.
This decoding is monotonic (the only sets of tied states are 
$\{0,1\}$ and $\{0\}$).
Then in state 1, the sender can change from $x^1=x$ to $\hat
x$ and increase the decoding probability from zero to at
least $p(y|\hat x)$.
This improves her payoff, so the code is not a Nash code.
\endproof

The preceding remark shows that Nash-stability requires 
looking at ``ambiguous'' codes that map more than one state
to the same codeword.
However, it also shows that if all channel transmission
probabilites are positive, then among any states mapped to
the same channel input, only those with maximum weight can
be decoded with positive probability.
Clearly (as argued before in the proof of Proposition~\ref{p-diag}),
``undecoded'' states $i$ so that $d(y,i)=0$ for all~$y$ 
can be ignored when checking Nash-stability.
However, according to Definition~\ref{d-stable}, this still
requires checking many conditions for the possible codes,
weights, and monotonic decoding functions.

It can be shown, but is beyond the scope of this paper,
that it is possible to restrict this check to
deterministic monotonic decoding functions.
Then no more than $|Y|$ states $i$ have the
property that $d(y,i)>0$ for some $y$ in~$Y$.
For all other states, the Nash property holds trivially.
For the weights for these states, there are only finitely
many combinations of producing ties for any output~$y$.
The following remark illustrates this for a channel that is
not input symmetric.

\begin{remark}
\label{symnotnec}
There are Nash-stable channels that are not products of
input symmetric channels.
\end{remark}

\proof
Consider the following channel with three symbols.
\begin{equation}
\renewcommand{\arraystretch}{1.5}
\label{stillnash}
\begin{tabular}{|r|lll|}
\hline
& &  \hfil $y$ & \\[-1ex]
\raise 2ex\hbox{$p(y|x)$} & \hfil 0 & \hfil 1 & \hfil 2 \\
\hline
0 & $4/7$ & $1/7$ & $2/7$\\
~$x$ \hfill 1 & $2/7$ & $4/7$ & $1/7$ \\
2 & $1/7$ & $2/7$ & $4/7$ \\
\hline
\end{tabular}
\end{equation} 
Consider deterministic monotonic decoding functions,
where at most three states have positive probability of
being decoded.
If there is only one state decoded with positive
probability, then the Nash condition holds trivially,
and for three states it holds by Proposition~\ref{p-diag}.
The symbols $0,1,2$ can be cyclically permuted without
changing the channel, so suppose the code for two states $0$
and $1$ uses codewords $x^0=0$ and $x^1=1$.
The decoding depends on the relative weights $q_iV_i$, so
suppose priors are uniform and $V_0=1$.
Then for $1/4<V_1<2$ we have 
$Y_0=\{0,2\}$ and $Y_1=\{1\}$,
which gives a Nash code.
If $V_1<1/4$ then $Y_1$ is empty and $Y_0=\{0,1,2\}$, which
gives trivially a Nash code, and similarly if $V_1>2$.
If $V_1=1/4$, then $Y_0=\{0,1,2\}$ and $Y_1=\{1\}$,
and the two states are tied for $y=1$.
If output $y=1$ is decoded as state 0, then the Nash
property holds trivially, if as state 1, then sending $x^1$
gives the maximum decoding probability $4/7$, so this is
also a Nash code.

If $V_1=2$, then $Y_0=\{0,2\}$ and $Y_1=\{0,1,2\}$,
so that the two states are tied both for $y=0$ and $y=2$.
By consistency, both outputs $y=0$ and $y=2$ are decoded
either as state 0 or as state~1, which correspond to the
cases already considered and give Nash codes.

Finally, it is not hard to see that any mixed decoding
strategy that is monotonic is a convex combination of the
considered deterministic monotonic decoding functions, which implies
the Nash property as well.
This applies also to many states where more than one state
is mapped to the same input symbol.
\endproof 

The computational difficulty of deciding if a given channel
is Nash-stable is open.
The problem belongs to the complexity class co-NP because it
is is easy to verify that the channel is not Nash-stable, by
providing suitable weights, a code, a monotonic decoding
function, and a profitable deviation.
We envisage two possible answers:
Either one can show that Nash-stable channels require that
multiple ties occur simultaneously, like for input symmetric
channels or in the example (\ref{stillnash}), and check only
codes with few states.
In that case, there may be a polynomial-time algorithm.
Alternatively, the problem whether a channel is Nash-stable
may be co-NP-complete.
We leave this as a topic for future research.

\section{General deterministic monotonic decoding functions}
\label{s-mon}

When is a deterministic decoding function monotonic?
Suppose there is some fixed order on the set of states so
that always the first tied state is chosen according to that
order.
In this final section, we show that this is essentially the only
way to break ties with a deterministic monotonic decoding
function if it is defined for {\em all} sets of
tied states $T$ with up to three states.

Because any monotonic decoding function is consistent
according to (\ref{consistent}), it is useful to consider it
as a function $d:\mathcal T\times\Omega\to\reals$ where 
\begin{equation}
\label{calT}
T\in \mathcal T 
\quad
\Leftrightarrow
\quad
T=T(y)=\{l\in\Omega\mid y\in Y_l\} \quad\hbox{for some }y\in Y
\end{equation}
and
\begin{equation}
\label{general}
d(T,i) := d(y,i)
\quad
\hbox{if } T=T(y)
\end{equation}
which is well defined by (\ref{consistent}).
Whether we write $d(T,i)$ or $d(y,i)$ will be clear from the
context.

Consider again the example (\ref{circular}) with
$d(111,1)=1$ as shown on the left in Figure~\ref{binary}.  
The following decoding function, changed from
(\ref{circular}) so that $101$ is decoded as state~1, is
monotonic,
\begin{equation}
\label{nowNash} 
d(110,1)=1,\quad d(011,2)=1,\quad d(101,1)=1,
\quad d(111,1)=1,
\end{equation}
shown in the right picture in Figure~\ref{binary}.
This is a Nash code because all $y$ in the set $Y_1$, see
(\ref{Y0123}), are decoded as state~1;
whichever $\hat x$ in $Y_1$ the sender decides to transmit
instead of $x^1$, there is one $y$ in $Y_1$ for which
$p(y|\hat x)=\eps^2(1-\eps)$, so that the payoff to the
sender in (\ref{pay}) does not increase by changing from
$x^1$ to~$\hat x$.
\def\P{\prec}

As the right picture in Figure~\ref{binary} shows,
the decoding function in (\ref{nowNash}) can be defined by 
the following condition:
Consider a fixed linear order $\P$ on $\Omega$ (in
this case $0\P 1\P 2\P 3$) so that
\begin{equation}
\label{fixedorder}
d(T,i) = 1
~~~\Leftrightarrow~~~
i\in T
~\hbox{ and }~
\forall k\in T,~k\ne i ~:~
i\P k \,.
\end{equation}
That is, the decoding rule chooses the $\P$-smallest state
$i$ from the set~$T$.
A {\em fixed-order} decoding function $d$ fulfills (\ref{fixedorder})
for some~$\P$.
Such a decoding function is deterministic and clearly monotonic.

We want to show that any deterministic monotonic decoding
function is a fixed-order decoding function. 
We have to make the additional assumption that the
decoding function $d(T,i)$ is {\em general\/} in the sense
that it is defined for {\em any\/} nonempty set $T$ 
(where it suffices to require this at least for all $|T|\le 3$),
not only the sets $T$ in $\mathcal T$ that occur as sets of
tied states for some channel output~$y$ as in~(\ref{calT}).

Without this assumption, we could add to the above example 
another state with codeword $x^4=111$ so that the
``circular'' decoding function in (\ref{circular}) is
monotonic and gives a Nash code, but is clearly not a
fixed-order decoding function.
It is reasonable to require that a decoding function is
defined generally and does not just coincidentally lead to a
Nash code because certain ties do not occur (as argued
above, with the decoding (\ref{circular}) we do not have a
Nash code when ties have to be resolved for $y=111$).

For general decoding functions, the monotonicity condition
(\ref{mono}) translates to the requirement 
that for any $T,\hat T\subseteq \Omega$,
\begin{equation}
\label{genmono}
i\in T\subseteq \hat T
~~\Rightarrow~~
d(T,i)\ge d(\hat T,i).
\end{equation}

\begin{proposition}
\label{p-fixedorder}
Suppose that $d(T,i)$ is deterministic and defined for all
nonempty sets $T$ with $|T|\le 3$ (for example, if $\mathcal
T$ in $(\ref{calT})$ contains all these sets) and fulfills
$(\ref{genmono})$.
Then $d$ is a fixed-order decoding function.
\end{proposition}

\proof
Define the following binary relation $\P$ on $\Omega$:
\[
i\,\P\,k
~~~\Leftrightarrow~~~
d(\{i,k\},i)=1.
\]
Clearly, either $i\,\P\,k$ or $k\,\P\,i$ for any two states
$i,k$.
We claim that $\P$ is transitive, that is,
if $i\,\P\,k$ and $k\,\P\,l$, then $i\,\P\,l$. 
Otherwise, there would be a ``cycle'' of distinct $i,k,l$
with $i\,\P\,k$ and $k\,\P\,l$ and $l\,\P\,i$.
This is symmetric in $i,k,l$, so assume
$d(\{i,k,l\},i)=1$ and therefore
$d(\{i,k,l\},k)=0$ and
$d(\{i,k,l\},l)=0$.
However, with $T=\{i,l\}$ and $\hat T=\{i,k,l\}$ we have
$d(T,i)=0<1=d(\hat T,i)$, which contradicts (\ref{genmono}).

So $\P$ defines a linear order on $\Omega$.
We show that (\ref{fixedorder}) holds, that is, for any
$\hat T$ in $\mathcal T$ the decoded state $i$ (so that
$d(\hat T,i)=1$) is the $\P$-smallest element of $\hat T$.
This holds trivially and by definition if $\hat T$ has at most
two elements, otherwise, if $l\,\P\,i$ for some $l\in \hat T$,
then we obtain with $T=\{i,l\}$ the same contradiction
$d(T,i)=0<1=d(\hat T,i)$ as before.
So the decoded state is chosen according to the fixed order
$\P$ on $\Omega$ as claimed.
\endproof

When the weights $q_iV_i$ for the states $i$ are {\em
generic\/}, then $Y_i$ in (\ref{Yi}) is always a singleton,
so no ties occur and decoding is deterministic.
One can make any weights generic by perturbing them
minimally so that ties are broken uniquely but decoding is
otherwise unaffected.
That is, if $i$ and $k$ are tied for some $y$ because
$q_iV_i\, p(y|x^i)= q_kV_k\, p(y|x^k)$, this tie is broken
in favor of $i$ by slightly increasing~$q_iV_i$, which will 
then always happen whenever $i$ and~$k$ are tied originally. 
This induces a fixed-order decoding, where any linear order
among the states can be chosen.
Thus, Proposition~\ref{p-fixedorder} asserts that general
deterministic monotonic decoding functions are those
obtained by generic perturbation of the weights.

Finally, we observe that the above codebook $000,100,010,001$
with decoding as in (\ref{nowNash}) defines a Nash code (and
if priors are minimally perturbed so that $q_1>q_2>q_3$
there are no ties and decoding is unique), but this code is
not locally optimal as in Theorem~\ref{t-opt}.
Namely, by changing the codeword $100$ to $110$, all
possible channel outputs $y$ differ in at most one bit from
one of the four codewords, which
clearly improves the payoff to the receiver.
So not all binary Nash codes are locally receiver-optimal.

\section*{References}
\addcontentsline{toc}{section}{References} 
\frenchspacing
\parindent=-26pt\advance\leftskip by26pt
\parskip=3pt plus1pt minus1pt
\small
\hskip\parindent
Anshelevich, E., et al. (2008),
The price of stability for network design with fair cost
allocation.
SIAM Journal on Computing 38, 1602--1623.  

Argiento R., R. Pemantle, B. Skyrms, and S. Volkov (2009),
Learning to signal: Analysis of a micro-level reinforcement
model.  Stochastic Processes and their Applications 119,
373--390.

Blume, A., and O. J. Board (2013),
Intentional vagueness.
Er\-kennt\-nis,
DOI 10.1007/s10670-013-9468-x, 45 pages.

Blume, A., O. J. Board, and K. Kawamura (2007),
Noisy talk. Theoretical Economics 2, 395--440.

Cover, T. M., and J. A. Thomas (1991),
Elements of Information Theory.
Wiley, New York.  

Crawford, V., and J. Sobel (1982),
Strategic information transmission.
Econometrica 50, 1431--1451.  


De Jaegher, K., and R. van Rooij (2013),
Game-theoretic pragmatics under conflicting and common
interests.
Er\-kennt\-nis, DOI 10.1007/s10670-013-9465-0, 52 pages.

Gallager, R. G. (1968),
Information Theory and Reliable Communication. 
Wiley, New York.

Glazer, J., and A. Rubinstein (2004),
On optimal rules of persuasion.
Econometrica 72, 1715--1736.

Glazer, J., and A. Rubinstein (2006),
A study in the pragmatics of persuasion: A game theoretical approach.
Theoretical Economics 1, 395--410.

Hern\'andez, P., A. Urbano, and J. E. Vila (2010),
Nash equilibrium and information transmission coding
and decoding rules.
Discussion Papers in Economic Behaviour ERI-CES 09/2010, 
University of Valencia.

Hern\'andez, P., A. Urbano, and J. E. Vila (2012),
Pragmatic languages with universal grammars.
Games and Economic Behavior 76, 738--752.

J\"ager, G., L. Koch-Metzger, and F. Riedel (2011),
Voronoi languages: Equilibria in cheap talk games with
high-dimensional types and few signals.
Games and Economic Behavior 73, 517--537.

Kamenica, E., and M. Gentzkow (2011),
Bayesian persuasion.
American Economic Review 101, 2590--2615.  

Koessler, F. (2001),
Common knowledge and consensus with noisy communication.
Mathematical Social Sciences 42, 139--159.  

Kreps, D. M., and J. Sobel (1994),
Signalling.
In: R. J. Aumann and S. Hart, eds.,
Handbook of Game Theory with Economic Applications, Vol. 2,
Elsevier, Amsterdam, 849--867.  

Lewis, D. (1969),
Convention: A Philosophical Study.
Harvard University Press, Cambridge, MA.  

Lipman, B. (2009),
Why is language vague?
Mimeo, Boston University.

MacKay, D. J. C. (2003),
Information Theory, Inference, and Learning Algorithms.
Cambridge University Press, Cambridge, UK.

MacKenzie, A. B., and L. A. DaSilva (2006),
Game Theory for Wireless Engineers.
Morgan and Claypool.  

Monderer, D., and L. S. Shapley (1996), 
Potential games.
Games and Economic Behavior 14, 124--143.

Myerson, R. B. (1994),
Communication, correlated equilibria and incentive compatibility.
In: R. J. Aumann and S. Hart, eds.,
Handbook of Game Theory with Economic Applications, Vol.~2,
Elsevier, Amsterdam, 827--847.

Nowak, M., and D. Krakauer (1999),
The evolution of language. Proc. Nat. Acad. Sci. USA 96,
8028--8033.

Pawlowitsch, C. (2008),
Why evolution does not always lead to an optimal signaling
system. Games and Economic Behavior 63, 203--226.

Shannon, C. E. (1948), 
A mathematical theory of communication.
Bell System Technical Journal 27, 379--423; 623--656.

Sobel, J. (2012),
Complexity versus conflict in communication.
Proc. 46th Annual Conference on Information Sciences and
Systems (CISS).
DOI 10.1109/CISS.2012.6310777, 6 pages.

Sobel, J. (2013),
Giving and receiving advice.
In: Advances in Economics and Econometrics,
Tenth World Congress of the Econometric Society,
D. Acemoglu, M.  Arellano and E. Dekel (eds.), Cambridge
University Press.

Spence, M. (1973),
Job market signaling.
The Quarterly Journal of Economics 87, 355--374.

Srivastava, V., et al. (2005),
Using game theory to analyze wireless ad hoc networks.
IEEE Communications Surveys and Tutorials 7, Issue 4, 46--56.


Touri, B., and C. Lambort (2013),
Language evolution in a noisy environment.
Proc. American Control Conference (ACC), 
1938--1943.

W\"arneryd, K. (1993),
Cheap talk, coordination and evolutionary stability.
Games and Economic Behavior 5, 532--546.

\end{document}